\newenvironment{proof}[1][Proof]{\textbf{#1.} }{\ \rule{0.5em}{0.5em}}
\def\N{{\rm I\!N}}
\newtheorem{theorem}{Theorem}[section]
\newtheorem{proposition}[theorem]{Proposition}
\newcommand{\Frac}[2] {\frac{\textstyle #1} {\textstyle #2}}
\begin{document}

\title{A simple mathematical approach to optimize the structure of reaction-diffusion physicochemical systems}
\author{J.-P. Chehab\thanks{
Laboratoire Amienois de Math\'ematiques Fondamentales et Appliqu\'ees (LAMFA), {\small UMR} 7352,
 Universit\'e de Picardie Jules Verne, 33 rue Saint Leu, 80039 Amiens France
 , ({\tt
 Jean-Paul.Chehab@u-picardie.fr}).}, A. A. Franco\thanks{Laboratoire de R\'eactivit\'e et de Chimie des Solides (LRCS)
Universit\'e  de Picardie Jules Verne - CNRS / UMR 7314, 33, rue St. Leu, Amiens, France F-80039({\tt
 Alejandro.Franco@u-picardie.fr}) and
R\' eseau sur le Stockage Electrochimique de l'Energie (RS2E),
FR CNRS, 3459,
France}, 
Y. Mammeri\thanks{Laboratoire Amienois de Math\'ematiques Fondamentales et Appliqu\'ees  (LAMFA), {\small UMR} 7352,
 Universit\'e de Picardie Jules Verne, 33 rue Saint Leu, 80039 Amiens France  ({\tt
 Youcef.Mammeri@u-picardie.fr}).}}

\date{ }

\maketitle

\begin{abstract}
The calculation of optimal structures in reaction-diffusion models is of great importance in many physicochemical systems. We propose here a simple method to monitor the number of interphases for long times by using a boundary flux condition as a control.  We consider as an illustration a 1-D Allen-Cahn equation with Neumann boundary conditions. Numerical examples are given and perspectives for the application of this approach to electrochemical systems are discussed.
\end{abstract}
%%%%%%%%%%%%%%%%%%%%%%%%%%%%%%%%%%%%%%%%%%%%%%%%%%%%%%%%%%%%%%%%%%
%     INTRODUCTION
%%%%%%%%%%%%%%%%%%%%%%%%%%%%%%%%%%%%%%%%%%%%%%%%%%%%%%%%%%%%%%%%%%
\section{Introduction}
The dynamics of a large diversity of physicochemcal systems can be mathematically modeled as reaction-diffusion systems in which it is described  how the composition of multiple chemical species distributed in space change under the influence of competitive chemical reactions between the species (giving origin to a new species)  and the diffusion which causes the species to spread out  in the space.
It is well known that depending on the relative importance of the kinetics and the diffusion these systems can provide a large diversity of behaviors, including the formation of  complex structures and patterns see \cite{Sachs}.

Such a structure formation occurs for example during the solid phase formation and evolution in intercalation and conversion reactions in rechargeable lithium batteries \cite{Franco1,Franco2}, during the self-organisation of materials  occuring with the fabrication process of composite electrodes for electrochemical devices applications \cite{Malek1}, during the microstructural evolution of composite elecrodes upon their degradation \cite{Malek2}  and in other competitive chemically reactive systems like in the Belousov-Zhabotinsky reaction \cite{Sirimungkala}.\\

Designing appropriate controllers of these reaction-diffusion systems can reveal of great relevance within a reverse engineering approach for example towards the optimization of discharge-charge of lithium batteries (by for example enhancing the formation of solid phases during discharge more reversible upon charge) and the optimization of  the structure of  the fabricated electrodes as function of the fabrication parameters (e.g. temperature dynamics, reactant flow, etc.).\\

In this paper, we consider the one-dimensional Allen-Cahn equation 
\begin{eqnarray}
\Frac{\partial u}{\partial t} -\Frac{\partial^2 u}{\partial x^2} +\Frac{1}{\epsilon^2}f(u)=0 & x\in]0,1[, t >0,\\
u_x(0,t)=\alpha(t), u_x(1,t)=0& \forall t >0,\\
u(x,0)=u_0(x) & x\in ]0,1[.
\end{eqnarray}
\\
This reaction-diffusion equation describes the process of phase separation in  many situations.
It was originally introduced in \cite{AllenCahn} by Allen and Cahn to model the
motion of anti-phase boundaries in crystalline solids. In equation (1), $u$ represents the concentration
of one of the possible phases, $\epsilon$ represents the interfacial
width, supposed to be small as compared to the characteristic length of the laboratory scale. The homogenous
Neumann boundary condition (when $\alpha(t)=0$) traduces  that there is no loss of mass across the boundary walls.
However, the Allen-Cahn equation is invoqued in a large number of complicated
moving interface problems in materials science through a phase-field
approach, therefore a  large litterature in mathematical analysis 
and in numerical analysis is devoted to the study of the mathematical properties of this equation and of its simulation (see \cite{MPierre,JShen} and the references therein).\\

In equation (1), $f(u)$ represents the potential energy and $\alpha(t)$ represents the control flux at one of the boundaries; $f(u)$ is assumed having
stable roots $\rho_i$, $i=1,\cdots, r$ such that $f(\rho_i)=0$ and $f'(\rho_i)>0$.
It is observed in many cases that when $\epsilon <<1$ and as $t$ goes to $+ \infty$ , the solutions tend to steady states ${\bar u}$ which consist in (almost) piecewise constant functions whose the different values are equal to the stable roots of $f$ which represent the different phase stripes. Hence ${\bar u}$ exhibits large gradient near 
$\rho_i$, as illustrated in Figure (\ref{fig1}).
\begin{figure}[h]
\begin{center}
\includegraphics[height=7.0cm,width=8.5cm]{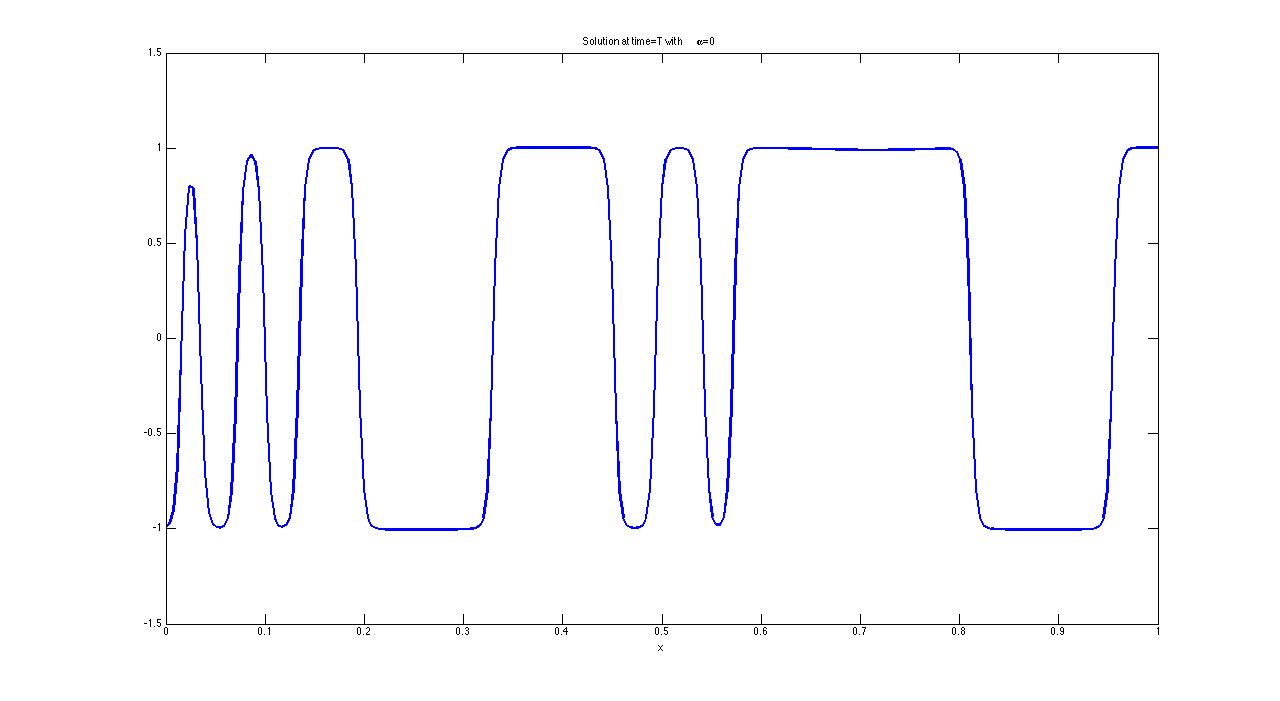}
\includegraphics[height=7.0cm,width=8.5cm]{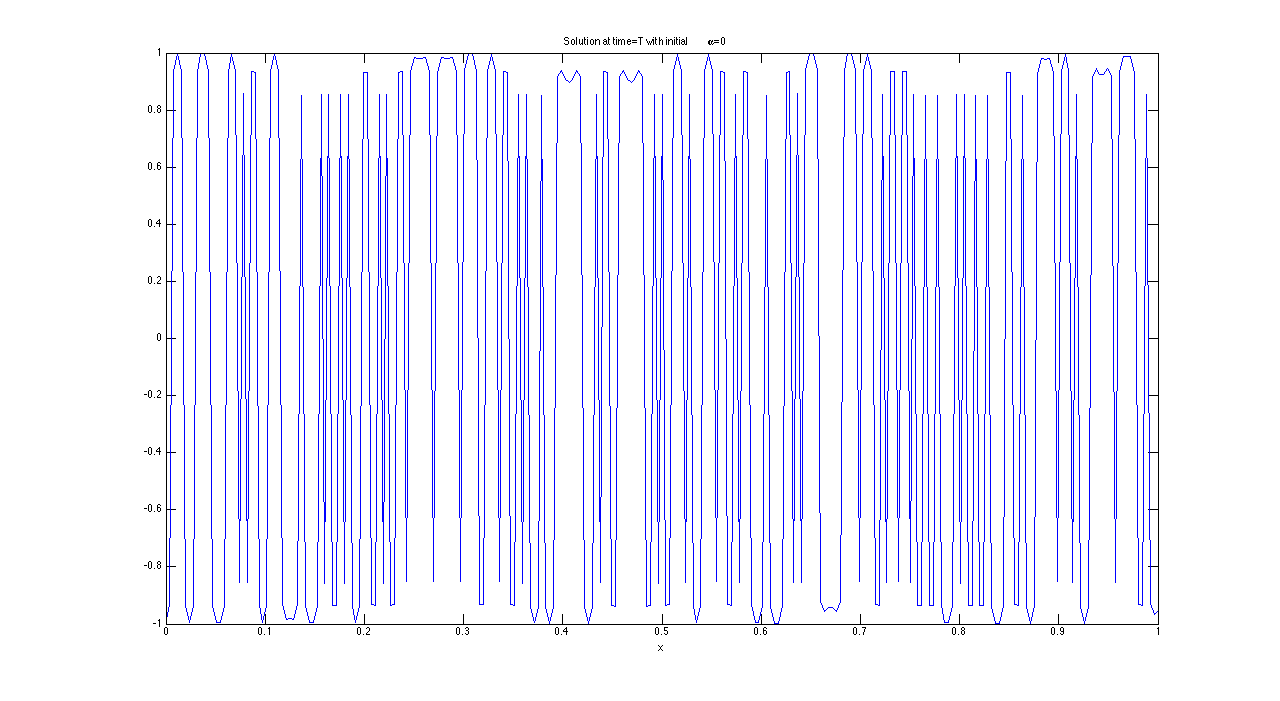}
\end{center}
\caption{Steady state for $\epsilon=0.004$ (left) and for $\epsilon=0.001$ (right).}
\label{fig1}
\end{figure}
\\
\\
An important issue in the conception of rechargeable lithium and post-lithium batteries, is the design of active materials providing upon the battery discharge a number of interphases as low as possible. The morphological simplicity of such discharged materials is expected to enhance the rechargeability of these type batteries and thus to increase their efficiency \cite{Franco2}.
In this paper we propose a first numerical strategy to calculate the boundary flux function $\alpha(t)$ on a given time interval $[0,T]$, with $T$ large enough, in such a way the number of interphase of the steady state ${\bar u}$ is minimized. To this end, we  consider as  control function $\alpha(t)$, $\epsilon$ being constant.

For the sake of simplicity, we first restrict ourselves to the case $f(u)=u(u^2-1)$ which possesses 3 roots: $u=\pm 1$ which are stable and $u=0$ which is unstable.\\

The article is organized as follows: first, in Section 2, we present first the global numerical strategy by deriving the estimation of the number of interphases, which will be the merit function to minimize. Then, we present the finite differences discretization of the system in space and we describe the numerical solver, that includes the optimization process as well as the full discretized problem to be solved at each iteration. In Section 3, we present some numerical results demonstrating  the numerical controllability of the problem: we calculate optimal$\alpha$ for different values of $u_0$, $T$ and $\epsilon$. Finally, in Section 4 we conclude and indicate further perspectives of development of our work.
\section{Numerical strategy}
\subsection{Estimation of the number of interphases}
We consider the finite differences discretization in space of the Allen-Cahn equation which leads to a differential system. The grid points $x_i$, $i=1,\cdots, N$ are regularly spaced for simplicity, $h$ is the corresponding stepsize.
We assume that $h$ is small enough in order the discrete solution captures the strong gradients near the interphases.

The steady solution ${\bar u}$ is considered to be almost piecewise constant, so its approximations at grid points ${\bar u}_i$,
$i=1,\cdots, N$ take the values $\pm 1$. Hence
$$
{\bar u}_{i+1}-{\bar u}_i 
=\left\{
\begin{array}{c}
0 \\
2 \\
-2  \\
\end{array}
\right.
$$
Therefore, the number of interphases is
\begin{eqnarray}
\label{formula_changes}
N({\bar u})&=\Frac{1}{2}\displaystyle{\sum_{i=0}^{N}\mid {\bar u}_{i+1}-{\bar u}_i  \mid}.
\end{eqnarray}
This quantity can be related to the $L^1$-norm of $u'$, indeed
\begin{eqnarray}
N({\bar u})=\Frac{1}{2}\displaystyle{\sum_{i=0}^{N}\mid \Frac{{\bar u}_{i+1}-{\bar u}_i}{h}  \mid h}\simeq
\Frac{1}{2}\displaystyle{\int_0^1 \mid {\bar u}'(x)\mid dx}.
\end{eqnarray}
In Figure (\ref{fig1}) (left), we count 10 changes, the result given by formula (\ref{formula_changes}) is 9.9968 and
in Figure (\ref{fig1}) (right) 48 changes are counted while (\ref{formula_changes}) estimation is 47.7475.\\

We remark that an interesting numerical issue could be to plug an adaptive grid strategy since the steady solution needs only few points to be represented.
\subsection{Selection of given phases}
Our approach applies when more than 2 interphases are present. Indeed, consider for simplicity the case of $m$ stable phases. To obtain the number of interphases, it is sufficient to split the final signal profile into 4 parts, each of them reprensenting the state of one phase stirp  (see figure below in the case of $m=4$). Once done, we can apply  formula (\ref{formula_changes}) separately. This procedure allows using a weighted merit function
\begin{eqnarray}
F(u)=\displaystyle{\sum_{i=1}^m\omega_iN_i(u)},
\end{eqnarray}
where $N_i(u)$ is the number of connex components for phase $i$ and $\omega_i \ge 0$ the associated weight:  a large value of $\omega_i$ enforces the optimal state to provide small number of phases of type $i$. So it is possible to select a given profile. It has to be pointed out that $\displaystyle{\sum_{i=1}^mN_i(u)}\neq N(u)$, the total number of interphases, in fact $\displaystyle{\sum_{i=1}^mN_i(u)}\ge N(u)$, however the simplified summation formula allows selecting easily given phase stripes (Figure (\ref{fig3})).
\begin{figure}[h]
\begin{center}
\includegraphics[height=7.0cm,width=9cm]{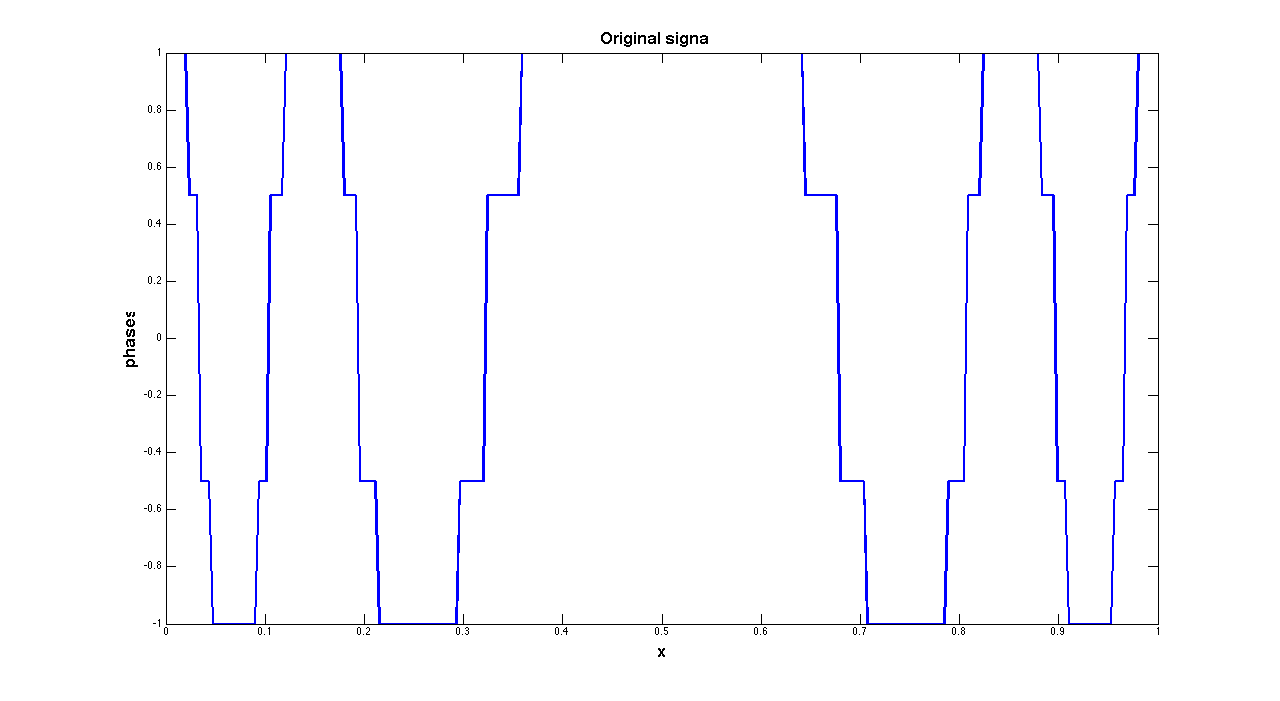}
\end{center}
\caption{The original signal}
\label{fig2}
\end{figure}
\clearpage
\begin{figure}[h]
\begin{center}
\includegraphics[height=7.0cm,width=8cm]{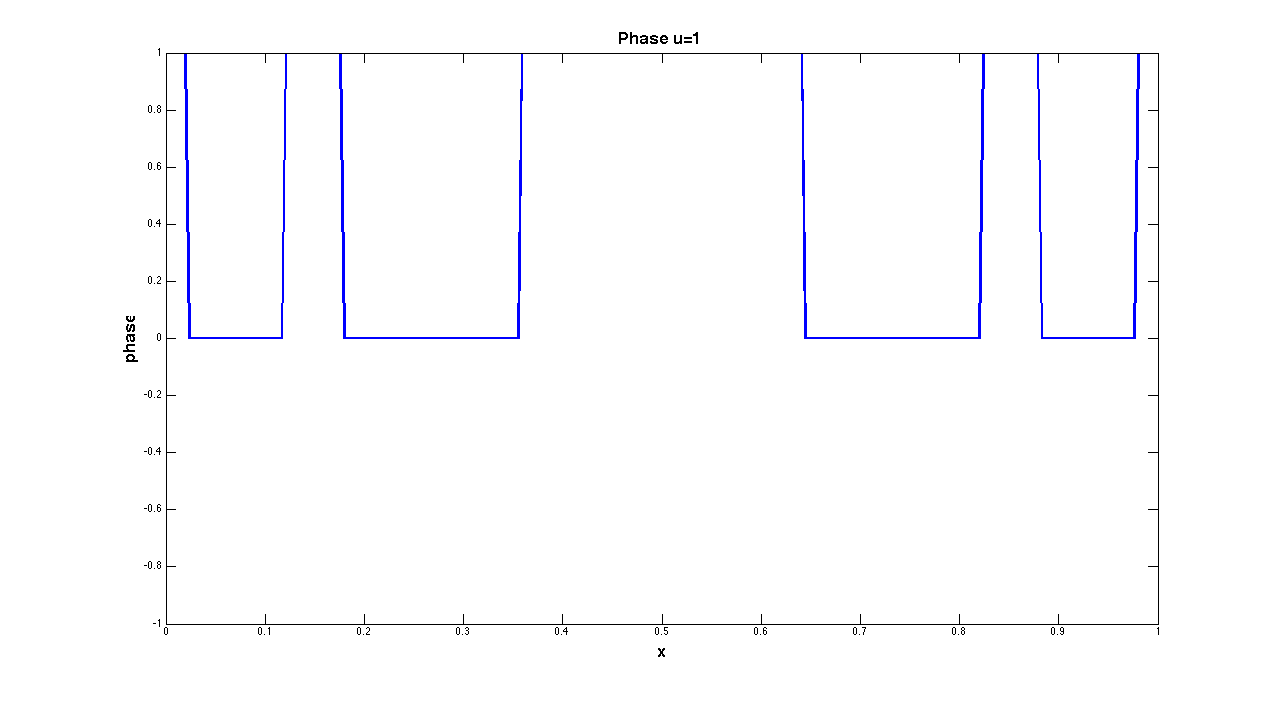}
\hskip 0.2cm
\includegraphics[height=7.0cm,width=8cm]{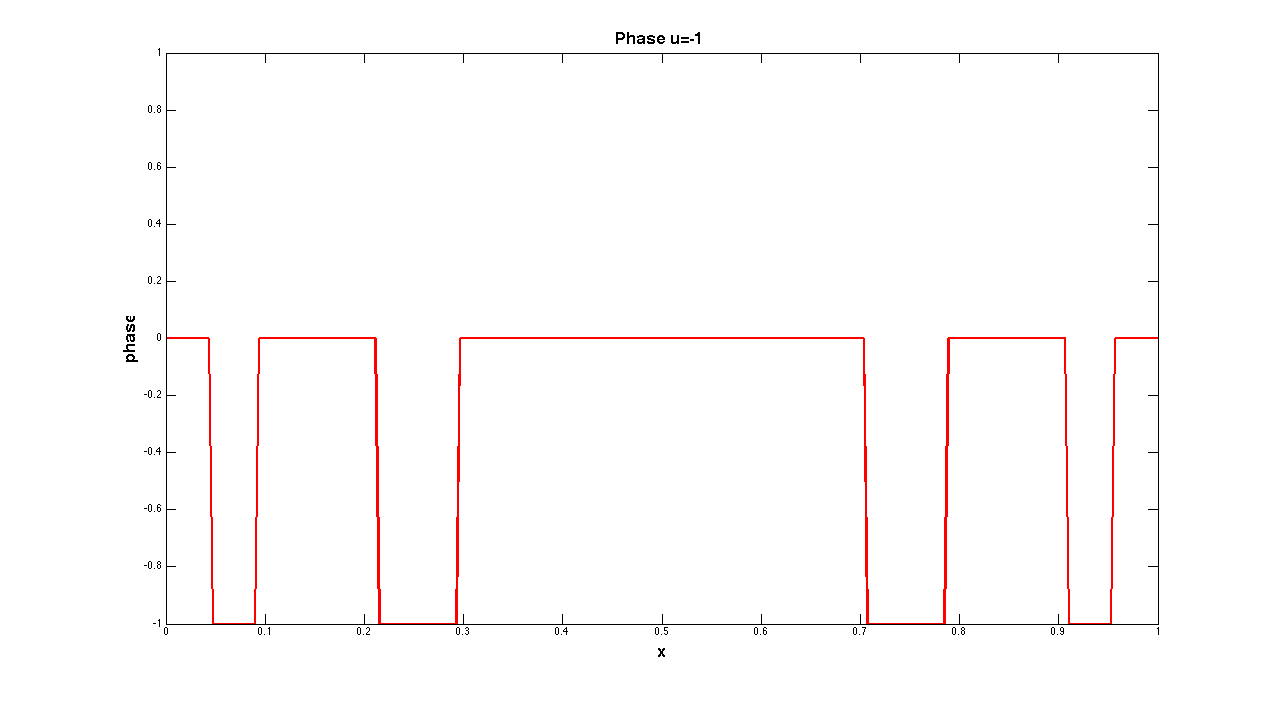}
\includegraphics[height=7.0cm,width=8cm]{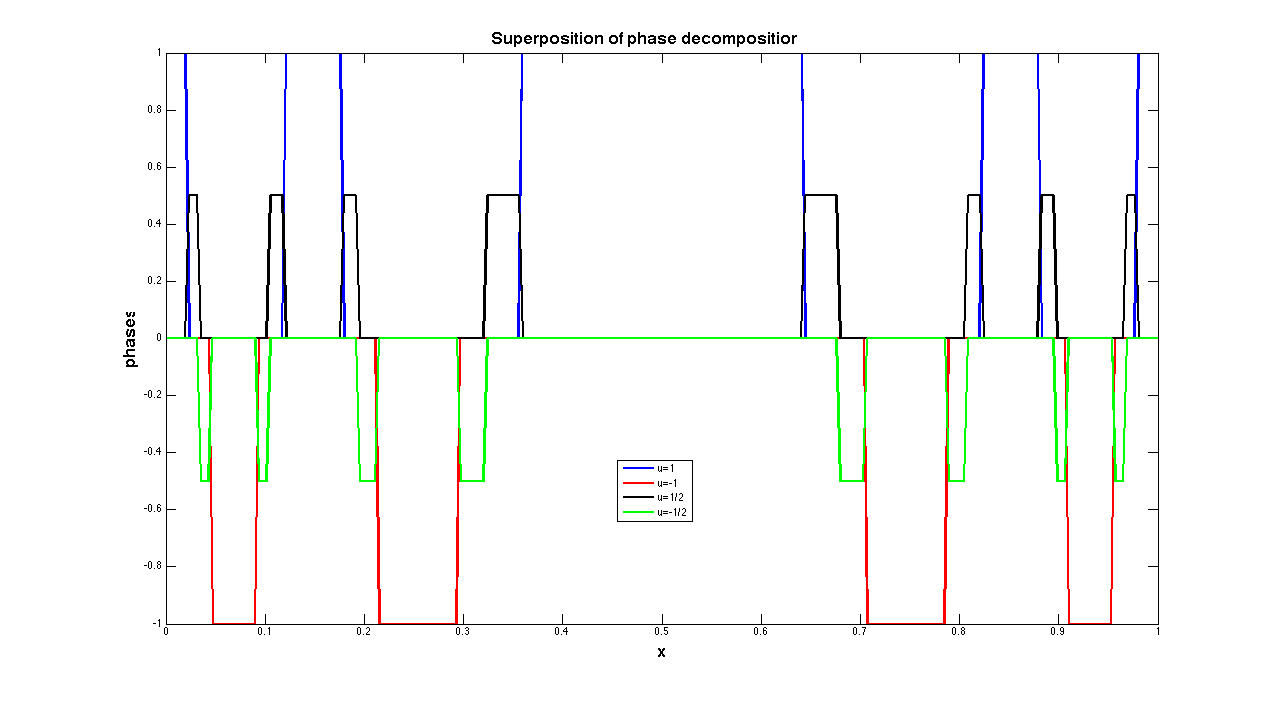}\\
\includegraphics[height=7.0cm,width=8cm]{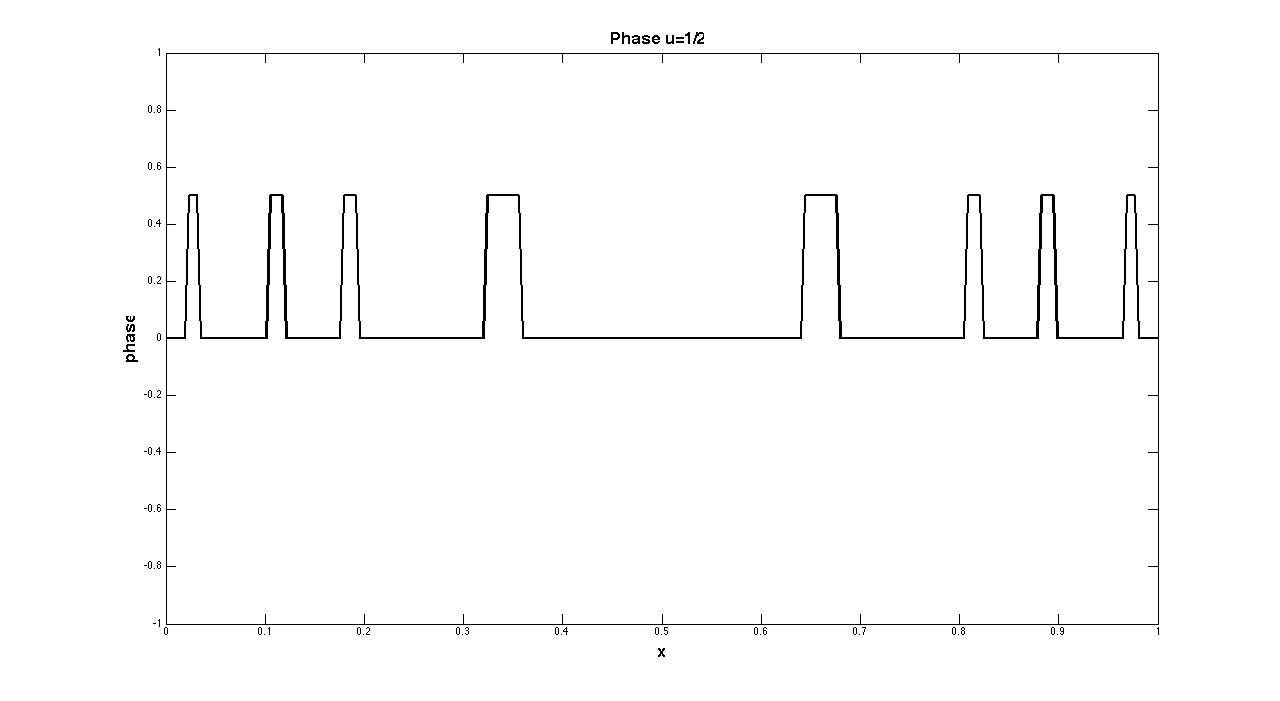}
\hskip 0.2cm
\includegraphics[height=7.0cm,width=8cm]{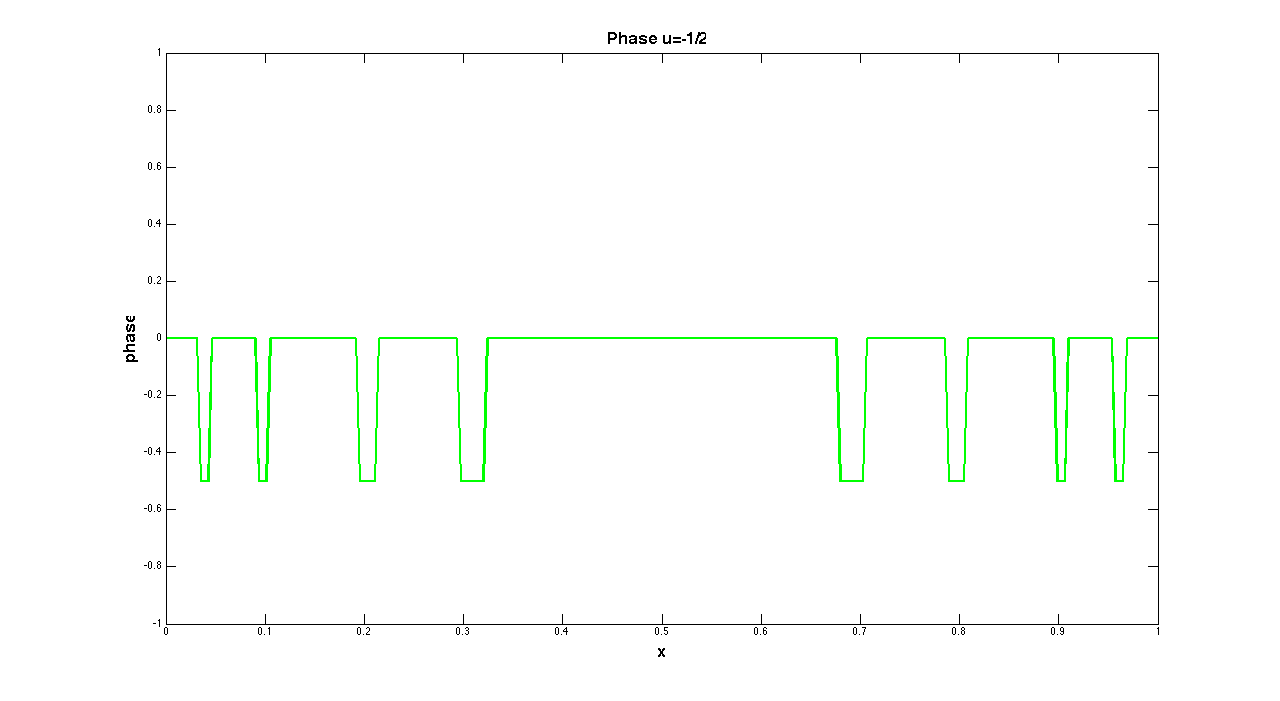}
\end{center}
\caption{Multiphase decomposition of a signal, superposition of the various phases is represented at the center}
\label{fig3}
\end{figure}
\clearpage
When the different phases $U_I, i=1,\cdots, m$ are known {\it  a priori}, the merit function can be defined more precisely as
\begin{eqnarray}
\label{weighted}
F(u)=\displaystyle{\sum_{i=1}^m\omega_i\parallel u-U_i\parallel^2}.
\end{eqnarray}
For instance, with $m=2$ and $f(u)=u(u^2-1)$, the two stable phases are $U_{1,2}=\pm 1$ and the merit function associated to $U_1=+1$ is
$$
F(u)=\parallel u-1\parallel^2,
$$
which can be considered  directly in the continuous case with, e.g., $L^2$ norm, giving rise to the merit function 
$F(u)=\displaystyle{\int_0^1 (u(x)-1)^2dx}$.
\subsection{Global scheme}
We denote by $u_i(t)$ the approximation of $u(x_i,t)$ generated by the semi discrete scheme
\begin{eqnarray}
\label{eq1}
\mbox{for } i=1,\cdots, N \ \Frac{du_i(t)}{dt} +\Frac{2u_i(t)-u_{i-1}(t)-u_{i+1}(t)}{h^2}+\Frac{1}{\epsilon^2}u_i(t)\left(u_i^2(t)-1\right)=0& t>0,\\
\label{eq2}
u_0(t)=u_1(t)-h\alpha(t) & t >0,\\
\label{eq3}
u_{N+1}(t)=u_{N}(t) & t >0,\\
\label{eq4}
u_i(0)=v_i. & 
\end{eqnarray}
in which we have implemented Neumann boundary condition to calculate the values $u_{N+1}(t)$ and $u_{0}(t)$. 

As a time marching scheme, we will use a semi implicit one, in order to have a good stability: it is important since, as we will see hereafter, the calculation of optimal $\alpha(t)$ requires a great number of numerical solutions of this system and a not too small time step $\Delta t$ must be used.

We fix a value for the final time $T$, $T$ being large enough to obtain a steady state:  in practice the solution converges toward equilibrium relatively fastly for small values of $\epsilon$, which is the case here. The time interval $[0,T]$ is subdivided into $M$ subintervals of length $\Delta t$, $[t_k,t_{k+1}]$, $k=0,\cdots M-1$, with $t_k=k\Delta t$ and
$t_M=T$. We note $u^M$ the numerical approximation to ${\bar u}$.

A first idea is to compute $\alpha(t)$ as a piecewise constant function  in time, say
\begin{eqnarray}
\alpha(t)=\displaystyle{\sum_{k=0}^{M-1} \alpha_k\chi_{[t_k,t_{k+1}]}}.
\end{eqnarray}
\\

We remark here that calculating $\alpha(t)$ as a piecewise constant function, is a simple and stable approach: other techniques allowing orthogonal polynomial (such as Fourier or Laguerre) could be used but the strong decreasing of the Fourier coefficients
make the problem ill-conditioned. Also, an heuristic method can be used for accelerating the numerical convergence: 
once computed $\alpha^M(t)$ for a given $M$ and $\Delta t$, one can repeat the computation for $2M$ time steps (with $\Delta t/2$), in order to get a more precise result $\alpha^{2M}(t)$,  starting from a mid-point interpolated value.\\

The problem we want to solve is then expressed as
\\
\begin{equation}
\mbox{find $(\alpha_0, \cdots,\alpha_{M-1})$ such as minimizing $N(u^M)$}.
\label{opt_pb}
\end{equation}
\\
We can now describe the global approach under the following algorithm
\begin{center}
\begin{minipage}[H]{12cm}
  \begin{algorithm}[H]
    \caption{Optimal Configuration Search}\label{OptSearch}
    \begin{algorithmic}[1]
    \State Initialization: Start from an initial guess $\alpha^{(0)}(t)=\displaystyle{\sum_{k=0}^{M-1} \alpha^{(0)}_k\chi_{[t_k,t_{k+1}]}}$
     \For{$m=0,1, \cdots$until convergence}
     \State Compute $u^M$ by time integration of (\ref{eq1})-(\ref{eq4}) with 
     $$\alpha^{(m)}(t)=\displaystyle{\sum_{k=0}^{M-1} \alpha^{(m)}_k\chi_{[t_k,t_{k+1}]}}$$
     \State Compute $N(u^M)$.
     \State Update $(\alpha^{(m+1)}_0, \cdots,\alpha^{(m+1)}_{M-1})$ from $\alpha^{(m)}$ by a derivative free  optimization process  (nonlinear search)
            \EndFor
    \end{algorithmic}
    \end{algorithm}
\end{minipage}
\end{center}

Also, we will have to establish numerical convergence by varying $M$, $\Delta t$.
\subsection{Practical solution to the optimization problem}
\subsubsection{Full discretization scheme of the equations}
We subdivide $[0,T]$ into $M$ subintervals of length $\Delta t$ and note $u^k_i$ the approximation of $u_i(t_k)$ at time $t_k=k \Delta t$.
Let $A$ be the discretization matrix of the negative seconde derivative en $x$ with homogeneous Neumann boundary conditions and $U^k=(u_0^k,\cdots,u^k_{n+1})^T$.
 We consider the following linearized implicit Euler scheme which reads, after usual simplifications as
\begin{eqnarray}
\label{Scheme1}
U^{k+1}+\Delta t A U^{k+1}+\Frac{\Delta t}{\epsilon^2}(U^k)^2U^{k+1}=U^k+\Frac{\Delta t}{\epsilon^2}U^k +\Frac{\Delta t}{h}F^k,
\end{eqnarray}
where $F^k=(-\alpha(t_k),0\cdots,0)^T$.
We have the
\begin{proposition}
Assume $\alpha=0$. If $\Delta t < \epsilon^2$ and if $-1\le U^0\le 1$ then
the sequence $U^k$ defined by the scheme (\ref{Scheme1}) satisfies $-1\le U^k\le 1, \forall k$.
\end{proposition}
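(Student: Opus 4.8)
The plan is to proceed by induction on $k$, after rewriting the scheme as a single matrix equation whose operator has a nonnegative inverse. First I would set $\alpha=0$ (so $F^k=0$) and move the $U^{k+1}$ terms of (\ref{Scheme1}) to the left: writing $D_k=\mathrm{diag}\big((u^k_i)^2\big)$ and
\[
M_k \;=\; I+\Delta t\,A+\Frac{\Delta t}{\epsilon^2}D_k ,
\]
the scheme reads $M_k U^{k+1}=\big(1+\Frac{\Delta t}{\epsilon^2}\big)U^k$. The induction hypothesis is $-\mathbf 1\le U^k\le\mathbf 1$ componentwise, where $\mathbf 1=(1,\dots,1)^T$; by hypothesis it holds for $k=0$.

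Next I would record the two structural facts that drive the argument. On one hand, $A$ is the standard three-point discretization of $-\partial_{xx}$ with homogeneous Neumann boundary conditions, so it has nonnegative diagonal entries, nonpositive off-diagonal entries, and zero row sums; in particular $A\mathbf 1=0$. On the other hand, since $D_k\ge0$ and we add $I$, the matrix $M_k$ has strictly positive diagonal, nonpositive off-diagonal entries, and is strictly diagonally dominant (the excess of each diagonal entry over the sum of the moduli of the off-diagonal entries in its row equals $1+\frac{\Delta t}{\epsilon^2}(u_i^k)^2>0$). Hence $M_k$ is a nonsingular M-matrix and $M_k^{-1}\ge0$ entrywise. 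This is the one substantive ingredient; the rest is a short computation.

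For the induction step, using $A\mathbf 1=0$ one gets $M_k\mathbf 1=\mathbf 1+\Frac{\Delta t}{\epsilon^2}\big((u^k_i)^2\big)_i$, so subtracting $M_kU^{k+1}$ and simplifying componentwise gives
\[
\big(M_k(\mathbf 1-U^{k+1})\big)_i=(1-u^k_i)\Big(1-\Frac{\Delta t}{\epsilon^2}u^k_i\Big),
\qquad
\big(M_k(U^{k+1}+\mathbf 1)\big)_i=(1+u^k_i)\Big(1+\Frac{\Delta t}{\epsilon^2}u^k_i\Big).
\]
Since $|u^k_i|\le1$, the factors $1\mp u^k_i$ are $\ge0$; and since moreover $\Delta t<\epsilon^2$, the factors $1\mp\frac{\Delta t}{\epsilon^2}u^k_i$ are $\ge1-\frac{\Delta t}{\epsilon^2}>0$. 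Hence both right-hand sides are $\ge0$, and applying $M_k^{-1}\ge0$ yields $\mathbf 1-U^{k+1}\ge0$ and $U^{k+1}+\mathbf 1\ge0$, i.e.\ $-\mathbf 1\le U^{k+1}\le\mathbf 1$, which closes the induction.

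The step I expect to need the most care is the justification that $M_k^{-1}\ge0$: one must check that the boundary rows produced by the Neumann/ghost-point elimination in (\ref{eq2})--(\ref{eq3}) keep the right sign pattern (nonpositive off-diagonal, zero row sum) and that the $+I$ term genuinely makes $M_k$ strictly diagonally dominant; everything else then follows from the two displayed identities together with the hypothesis $\Delta t<\epsilon^2$. If one prefers to sidestep the M-matrix machinery, the same conclusion can be reached by a pointwise maximum-principle argument: at an index $i_0$ where $u^{k+1}_{i_0}=\max_i u^{k+1}_i$ one has $(A U^{k+1})_{i_0}\ge0$, so $\big(1+\frac{\Delta t}{\epsilon^2}(u^k_{i_0})^2\big)u^{k+1}_{i_0}\le\big(1+\frac{\Delta t}{\epsilon^2}\big)u^k_{i_0}$, and the factorization $(1-u^k_{i_0})(1-\frac{\Delta t}{\epsilon^2}u^k_{i_0})\ge0$ (valid because $u^k_{i_0}\le1$ and $\Delta t<\epsilon^2$) forces $u^{k+1}_{i_0}\le1$; the lower bound is obtained symmetrically at the minimizing index.
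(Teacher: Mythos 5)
Your proof is correct and follows essentially the same route as the paper: induction on $k$, the observation that $I+\Delta t\,A+\frac{\Delta t}{\epsilon^2}D_k$ is an M-matrix with nonnegative inverse (equivalently, satisfies a discrete maximum principle), and a sign check of the right-hand side after shifting by $\pm\mathbf 1$ using $|u^k_i|\le 1$ and $\Delta t<\epsilon^2$. Your explicit factorization $(1\mp u^k_i)\bigl(1\mp\frac{\Delta t}{\epsilon^2}u^k_i\bigr)$ is just a cleaner rewriting of the paper's right-hand side $\bigl(1-\frac{\Delta t}{\epsilon^2}\bigr)\overline{U^k}-\frac{\Delta t}{\epsilon^2}(\overline{U^k})^2$, so no substantive difference.
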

\begin{proof}
If $\alpha^k=0$, we set $\overline{U^k}=U^k-1$ and $\underline{U^k}=U^k+1$. We'll show by induction that
$\overline{U^k}\le 0$ and $\underline{U^k}\ge 0 \forall k \in \N$. Let us fix $k$ and assume that $-1\le U^k\le 1$ says $\overline{U^k}\le 0$ and $\underline{U^k}\ge 0$.
We have, after simplifications
$$
\left(1+\Frac{\Delta t}{\epsilon^2} (\overline{U^k}+1)^2\right)\overline{U^{k+1}}
+\Delta t A \overline{U^{k+1}}=\left(1-\Frac{\Delta t}{\epsilon^2}\right)\overline{U^k} -\Frac{\Delta t}{\epsilon^2}(\overline{U^k})^2.
$$
If $\overline{U^k}\le 0$, then $\left(1+\Frac{\Delta t}{\epsilon^2} (1+\overline{U^k}^2+\overline{U^k}) \right) \ge 1-\Frac{\Delta t}{\epsilon^2}>0$, so the matrix
$$
M^k=diag\left(1+\Frac{\Delta t}{\epsilon^2} (1+\overline{U^k}) ^2\right)
+\Delta t A
$$
has the discrete maximum property ($M_k$ is a M-matrix) and since $\overline{U^k}\le 0$ and $\Frac{\Delta t}{\epsilon^2}<1$, we have $\left(1-\Frac{\Delta t}{\epsilon^2}\right)\overline{U^k} -\Frac{\Delta t}{\epsilon^2}(\overline{U^k})^2 \le 0$ so that
$$
\overline{U^{k+1}}\le 0.
$$
We proceed in a similar way for $\underline{U^k}$.
\end{proof}

As, we will see hereafter in the numerical simulations, this is observed for moderate values of $\alpha$.\\

It is important to underline that it is crucial that the numerical scheme heritates of the intrinsic properties of the equation (maximum principle, asymptotic behavior) for calculating a numerical control: if not, non-physical control can be determined\cite{Ignat1}.
\subsubsection{Choice of the optimization method}
As pointed out before, the principal key of this approach is the choice of the optimization method to compute $\alpha(t)$. This is not an easy task since the merit function is not differentiable in the $L^1$ case; generally the gradient is not available as in the $L^2$ case. Hence, gradient methods cannot be used. We have then to address to derivative free optimization algorithms \cite{ConnScheinbergVicente}.\\

In order to illutrate the numerical cacutation of optimal $\alpha$, we use Matlab build-in house solver, such as
{\tt x = fminsearch(fun,x0)} or {\tt x = fminunc(fun,x0)} where {\tt fun} is the objective function which in our case is built numerically by solving the Allen--Cahn equation up to $t=T$ with $\alpha$ as an input.

\section{Numerical Results}
The minimization procedure used is {\tt fminunc} available in from the optimization toolbox of Matlab computational software, \cite{Matlab} .
We fix $\epsilon$ and we start from $\alpha^0(t)=0, \forall t \in [0,T]$. We hereafter display
results for different values of $\epsilon$ and $T$.  The merit function is $N(u^M)$.
\subsection{Regular fixed initial data - numerical convergence}
Starting from a regular initial data allows to observe numerical convergence of the optimal control function $\alpha(t)$
as the number of discretisation points $N$ increases, and as, for a fixed $T$, the time step $\Delta t $ decreases.
We can see, in Figures (\ref{fig4})-(\ref{fig5})-(\ref{fig6})-(\ref{fig7}) the good coherence of the results for fixed values of $\epsilon=0.01$ and $T$ when varying $\Delta t$ and the number $N$ of grid points. In all the cases, the global procedure
allows minimizing the number of interphases or stripes.
\begin{figure}[h]
\begin{center}
\includegraphics[height=7.0cm,width=8cm]{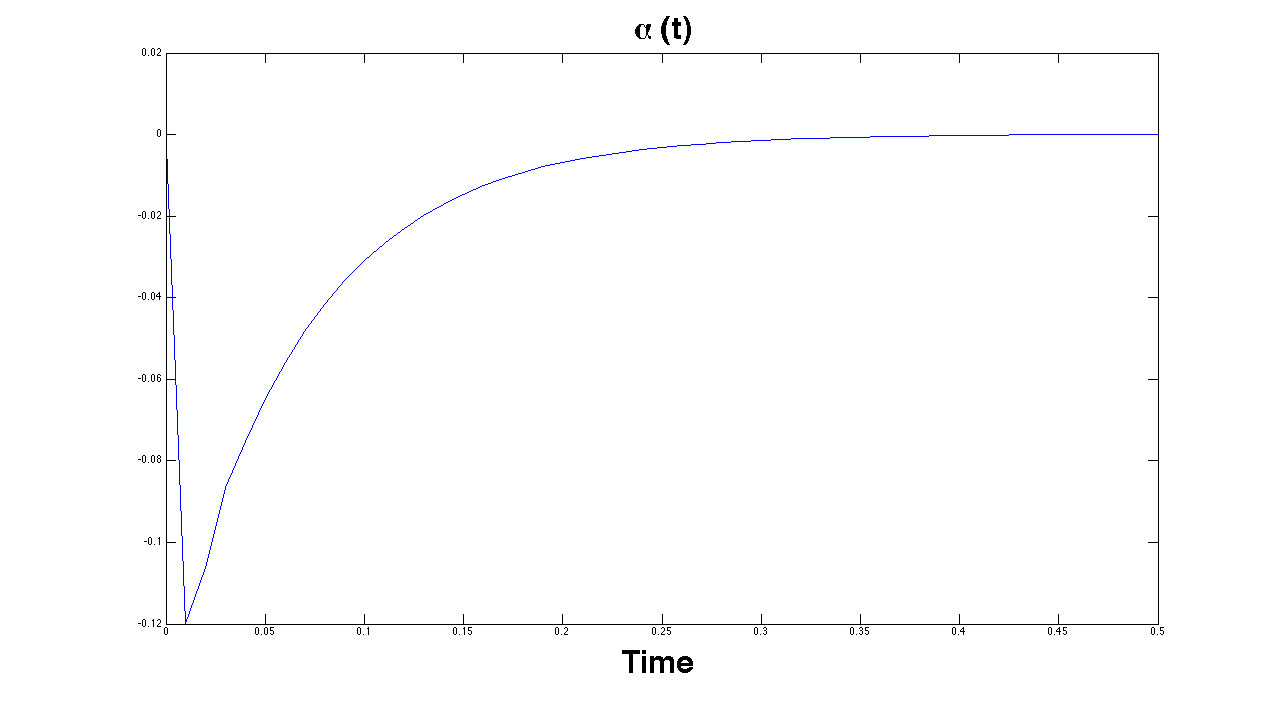}
\includegraphics[height=7.0cm,width=8cm]{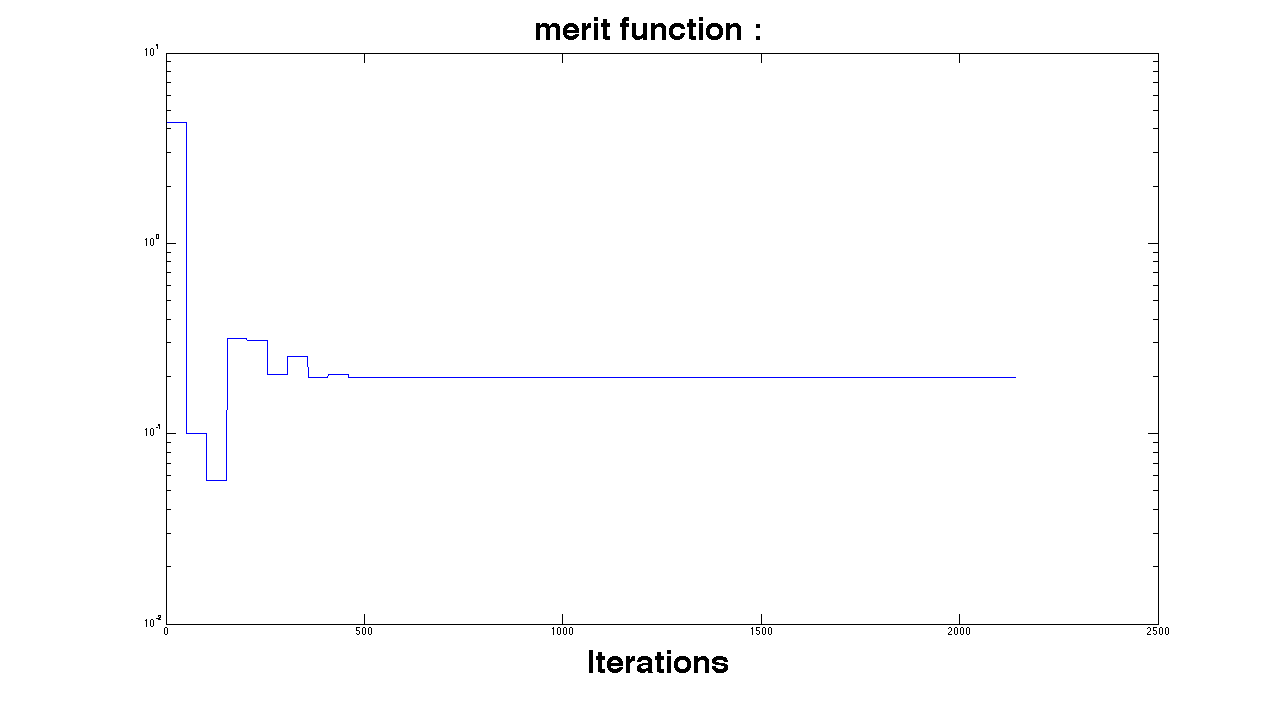}\\
\includegraphics[height=7.0cm,width=8cm]{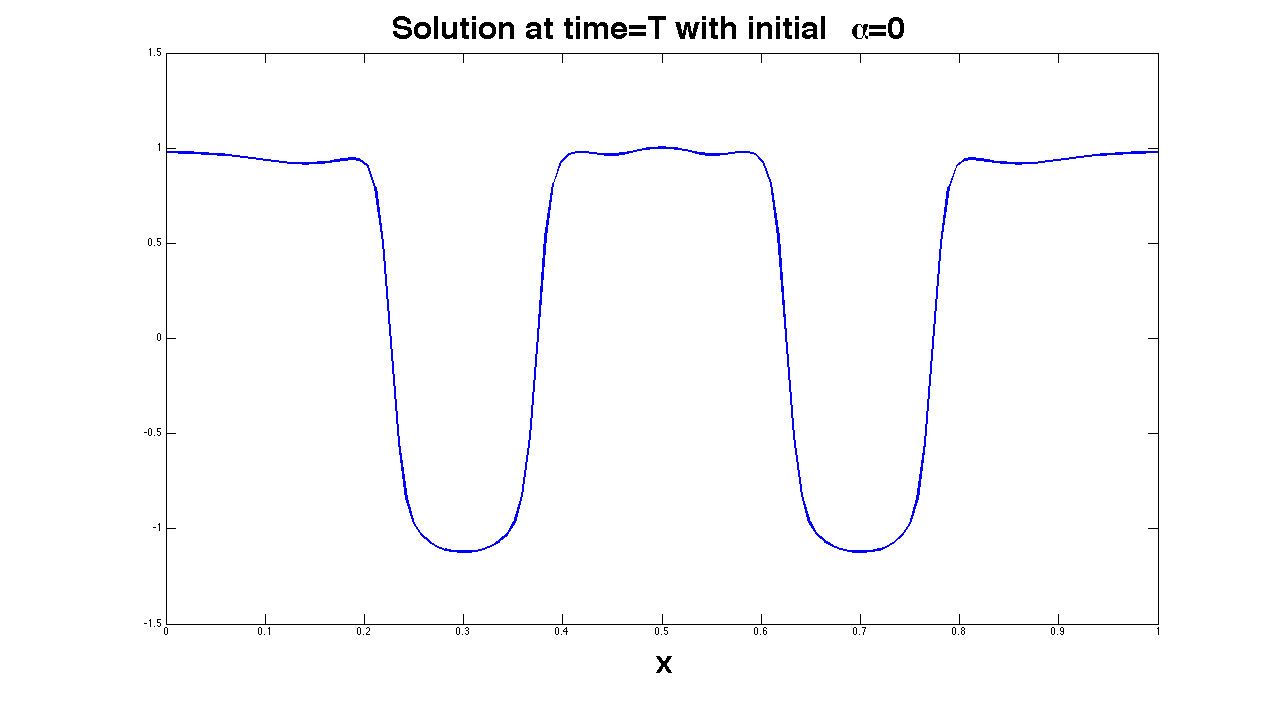}
\includegraphics[height=7.0cm,width=8cm]{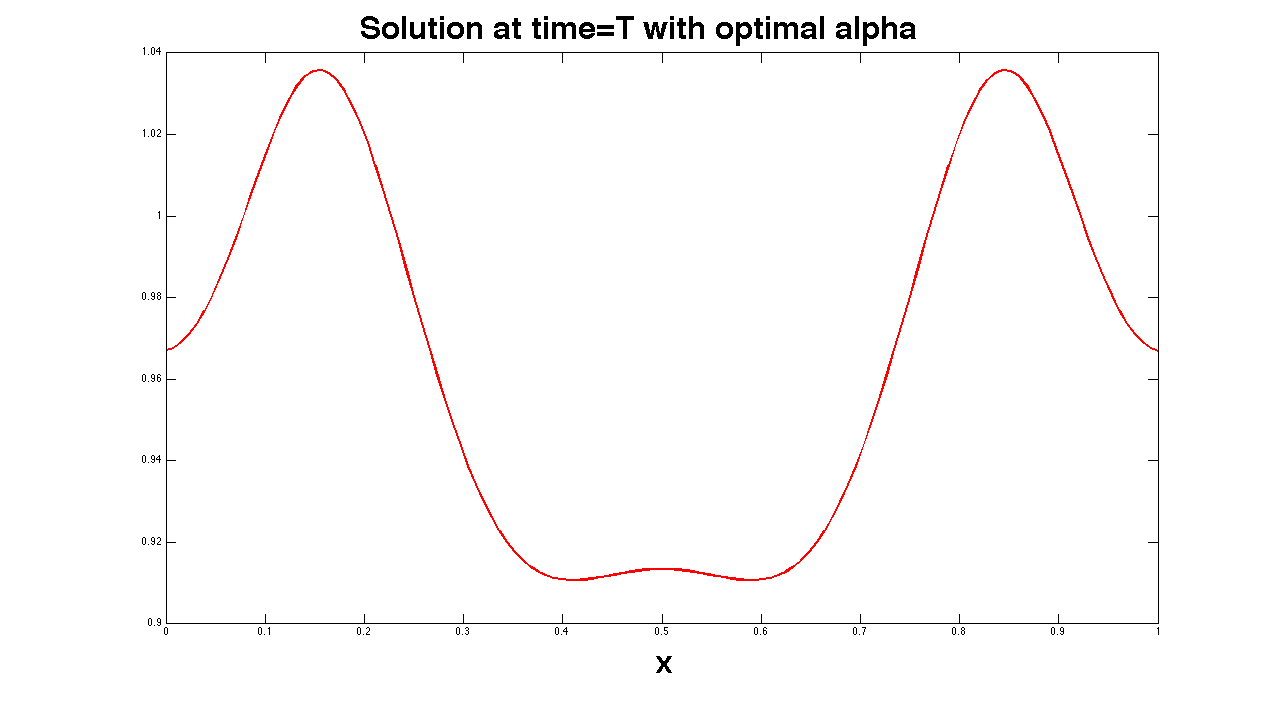}
\end{center}
\caption{$\epsilon=0.01, \ \Delta t = 0.01, \ T=0.5, u_0=\cos(20\pi x), \ N=127$}
\label{fig4}
\end{figure}
\clearpage
\begin{figure}[h]
\begin{center}
\includegraphics[height=7.0cm,width=8cm]{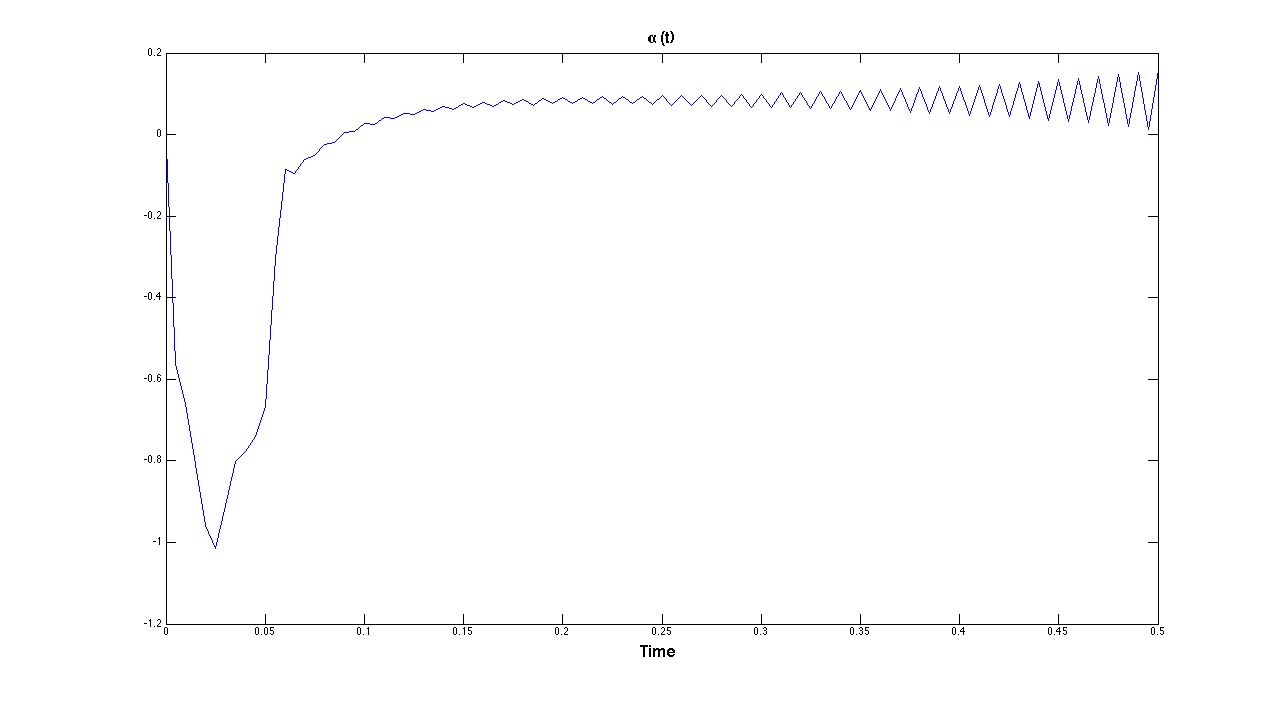}
\includegraphics[height=7.0cm,width=8cm]{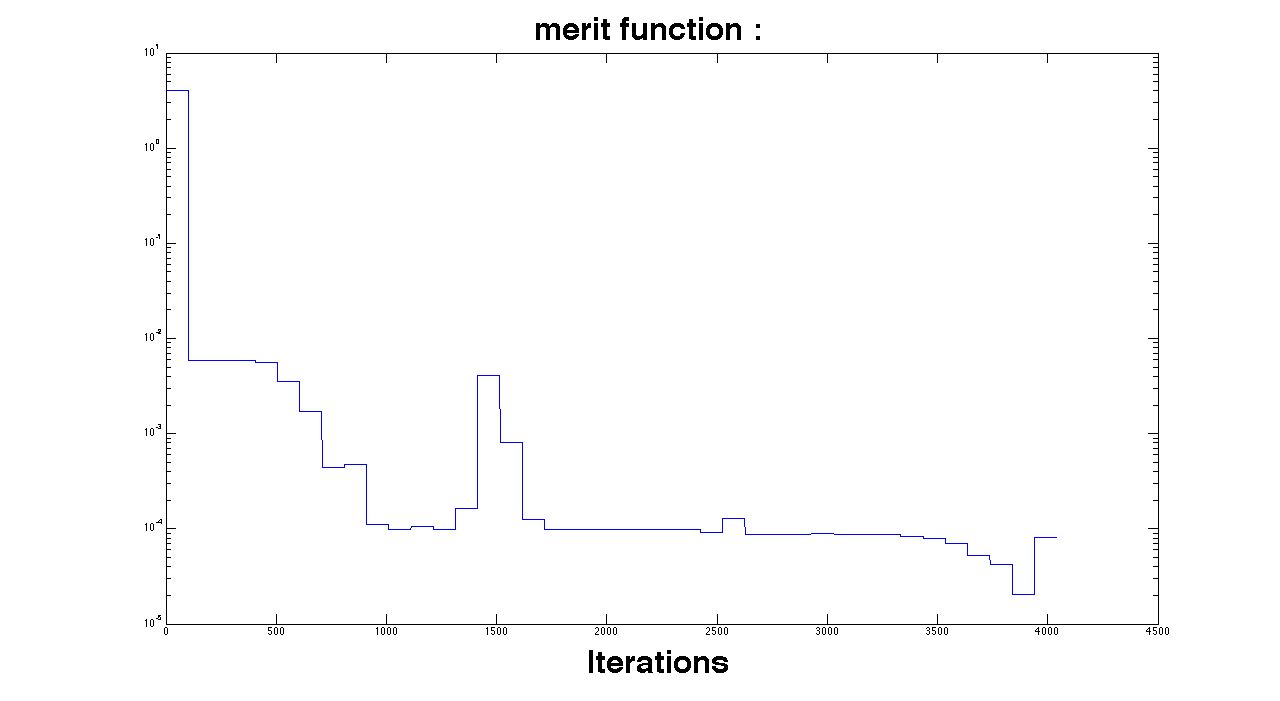}\\
\includegraphics[height=7.0cm,width=8cm]{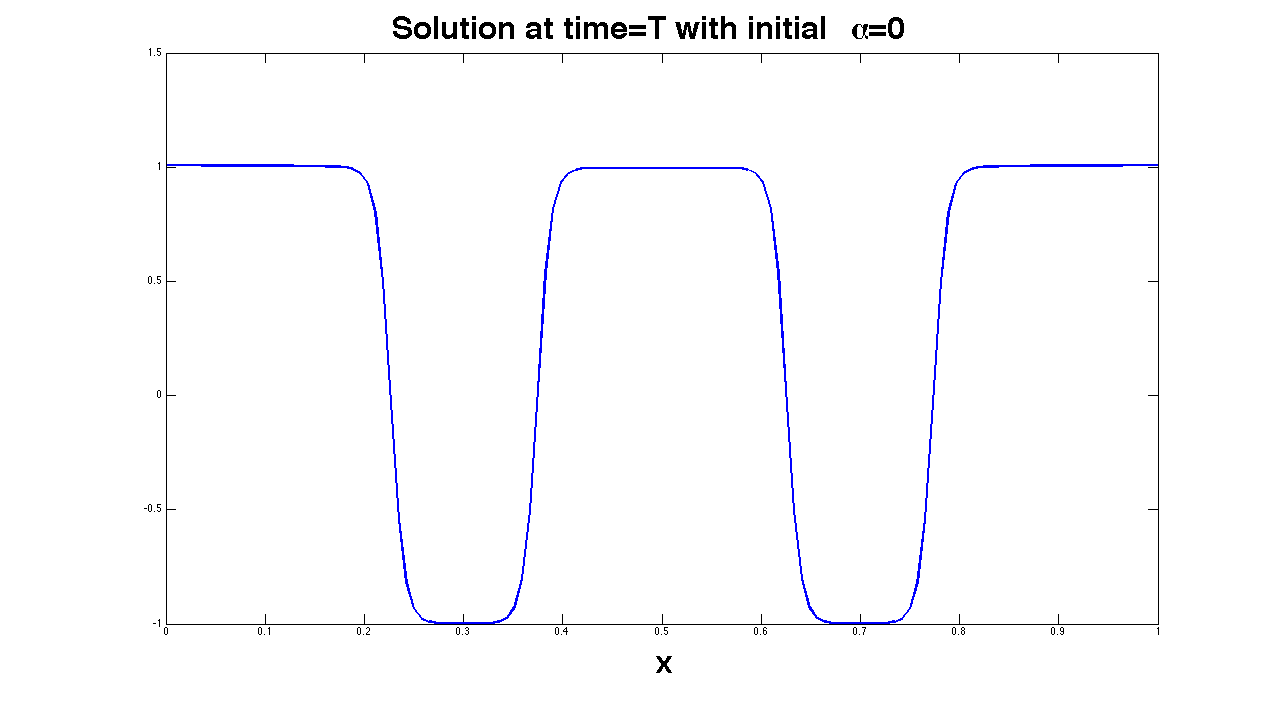}
\includegraphics[height=7.0cm,width=8cm]{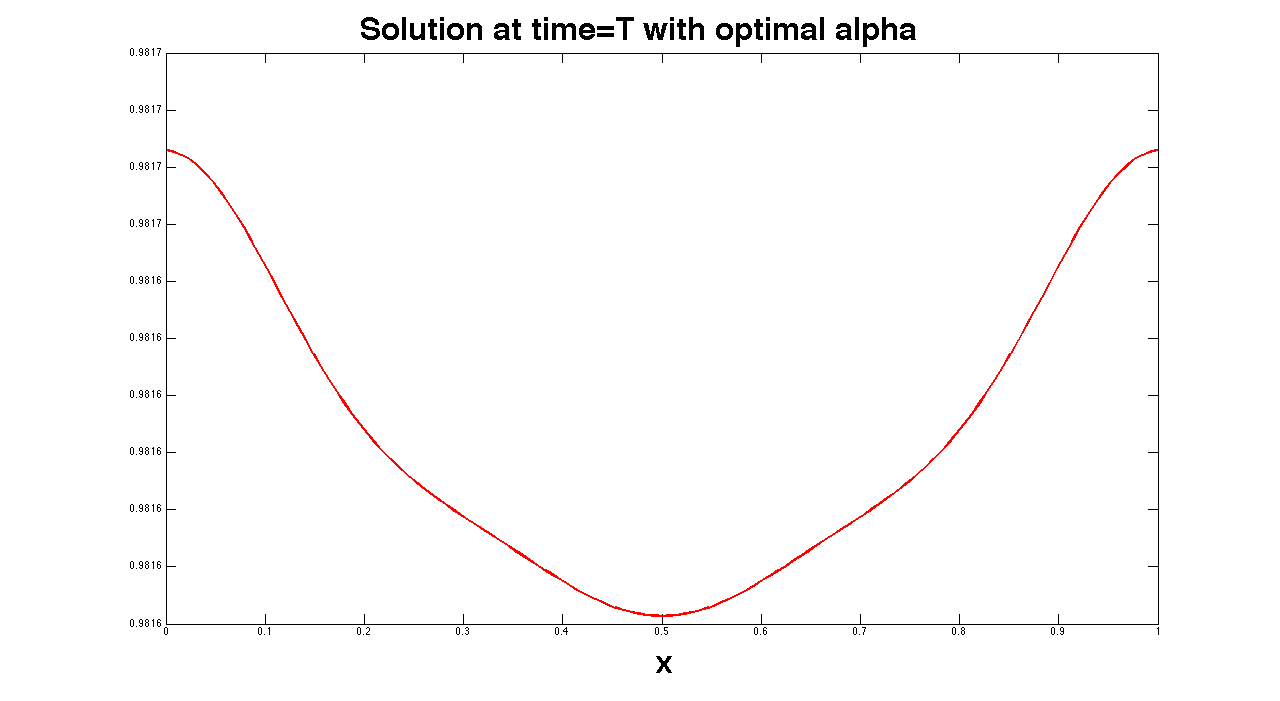}
\end{center}
\caption{$\epsilon=0.01, \ \Delta t = 0.005, \ T=0.5, u_0=\cos(20\pi x), \ N=127$}
\label{fig5}
\end{figure}
\clearpage
\begin{figure}[h]
\begin{center}
\includegraphics[height=7.0cm,width=8cm]{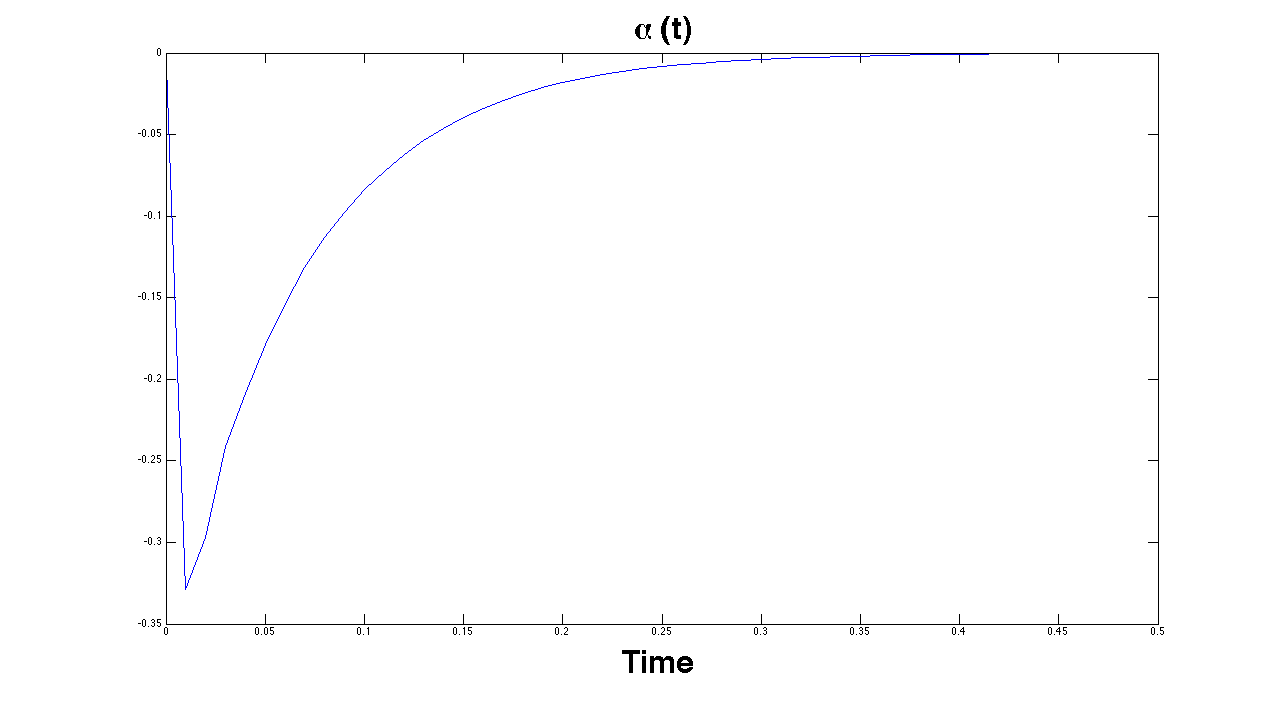}
\includegraphics[height=7.0cm,width=8cm]{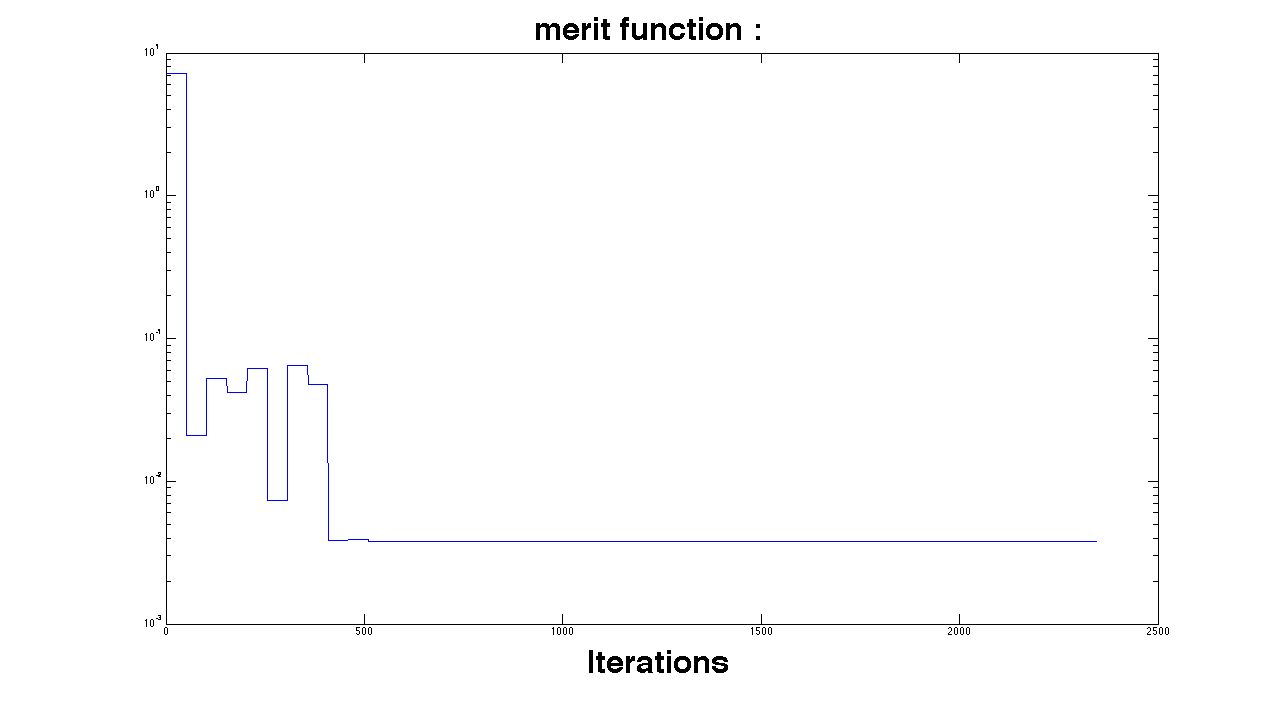}\\
\includegraphics[height=7.0cm,width=8cm]{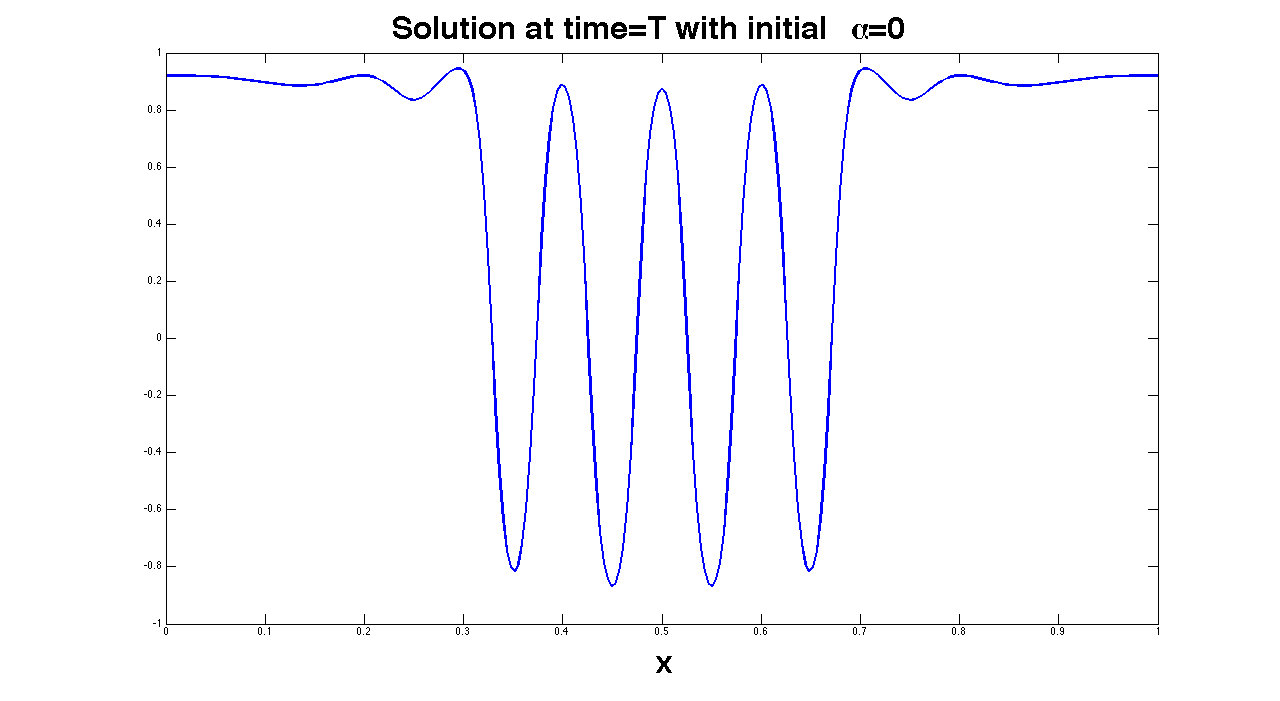}
\includegraphics[height=7.0cm,width=8cm]{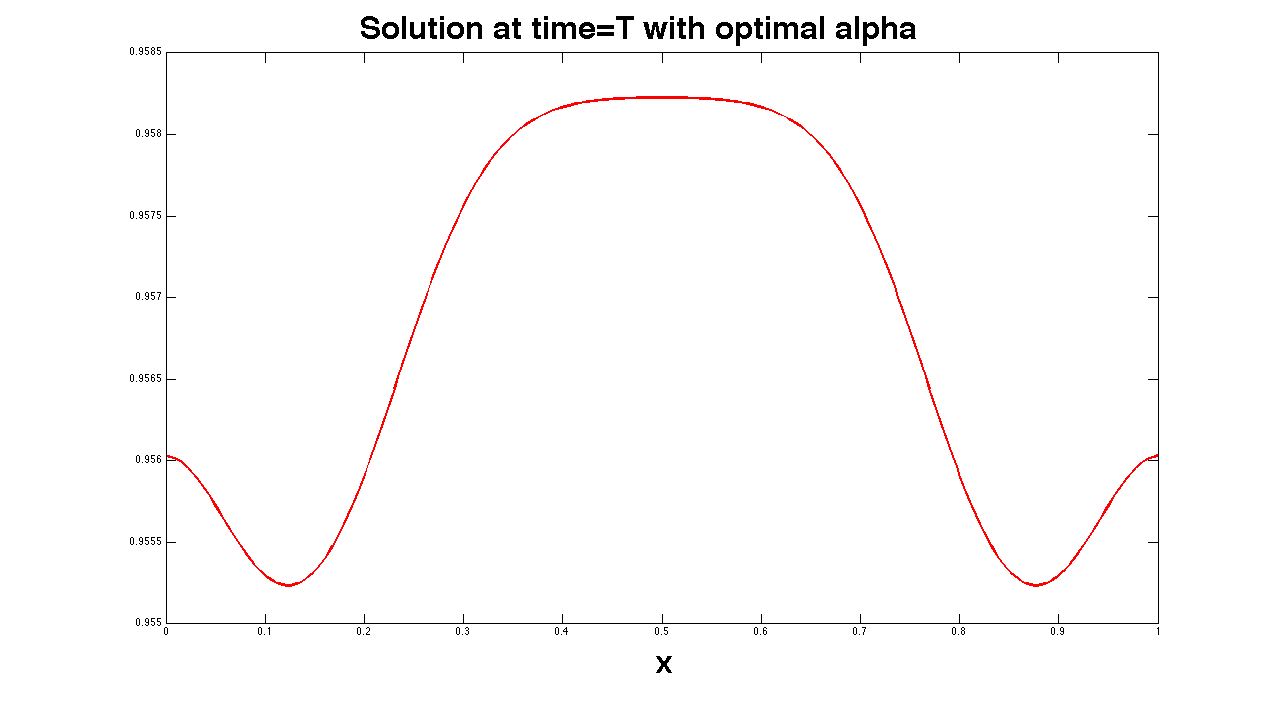}
\end{center}
\caption{$\epsilon=0.01, \ \Delta t = 0.01, \ T=0.5, u_0=\cos(20\pi x), \ N=255$}
\label{fig6}
\end{figure}
\clearpage
\begin{figure}[h]
\begin{center}
\includegraphics[height=7.0cm,width=8cm]{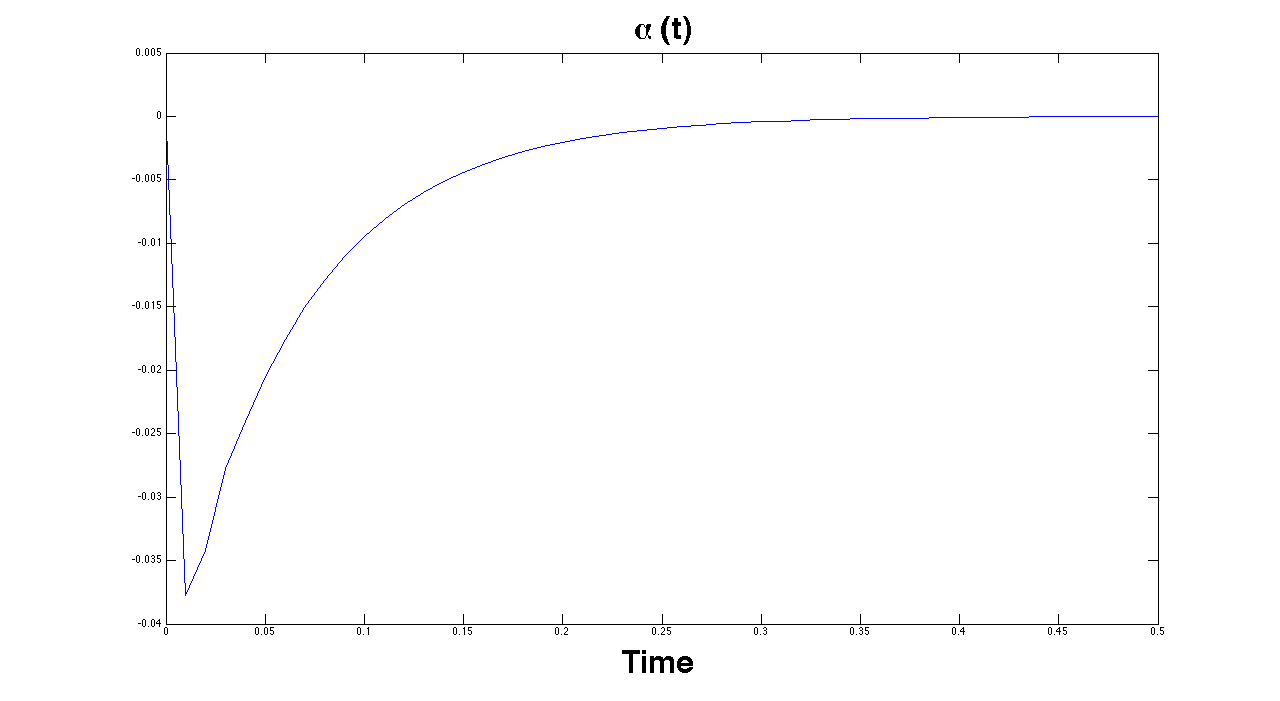}
\includegraphics[height=7.0cm,width=8cm]{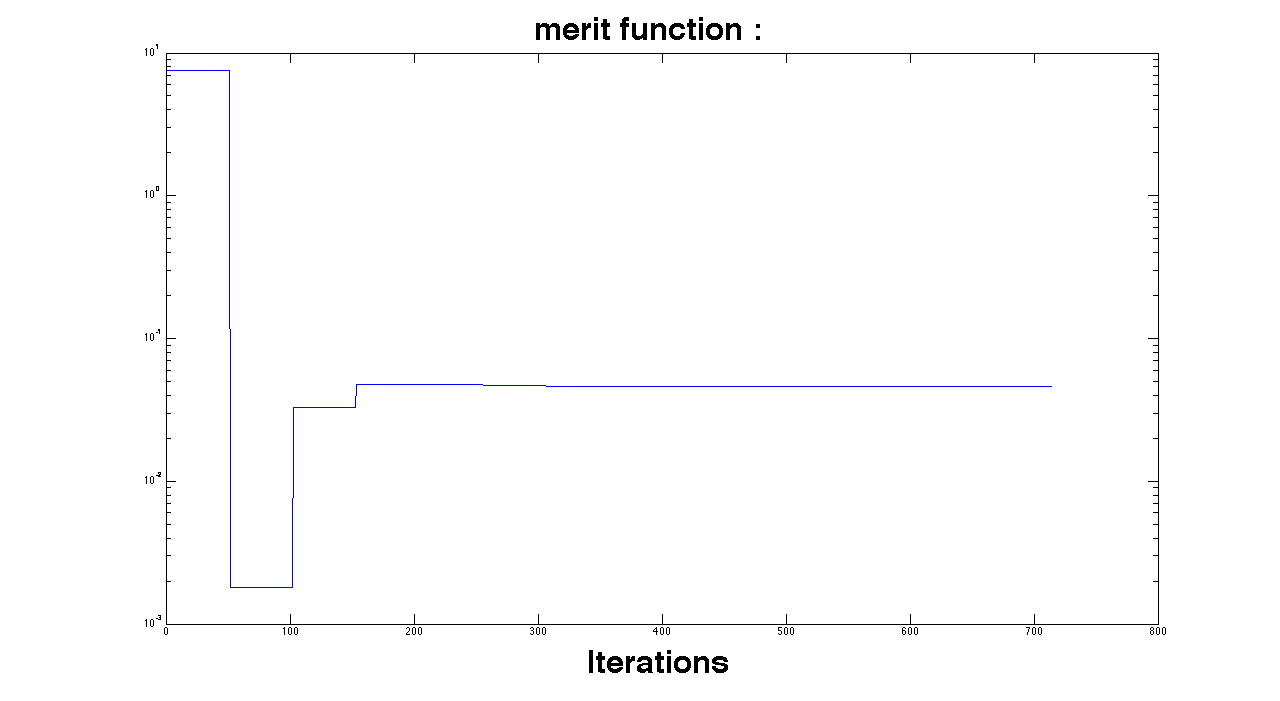}\\
\includegraphics[height=7.0cm,width=8cm]{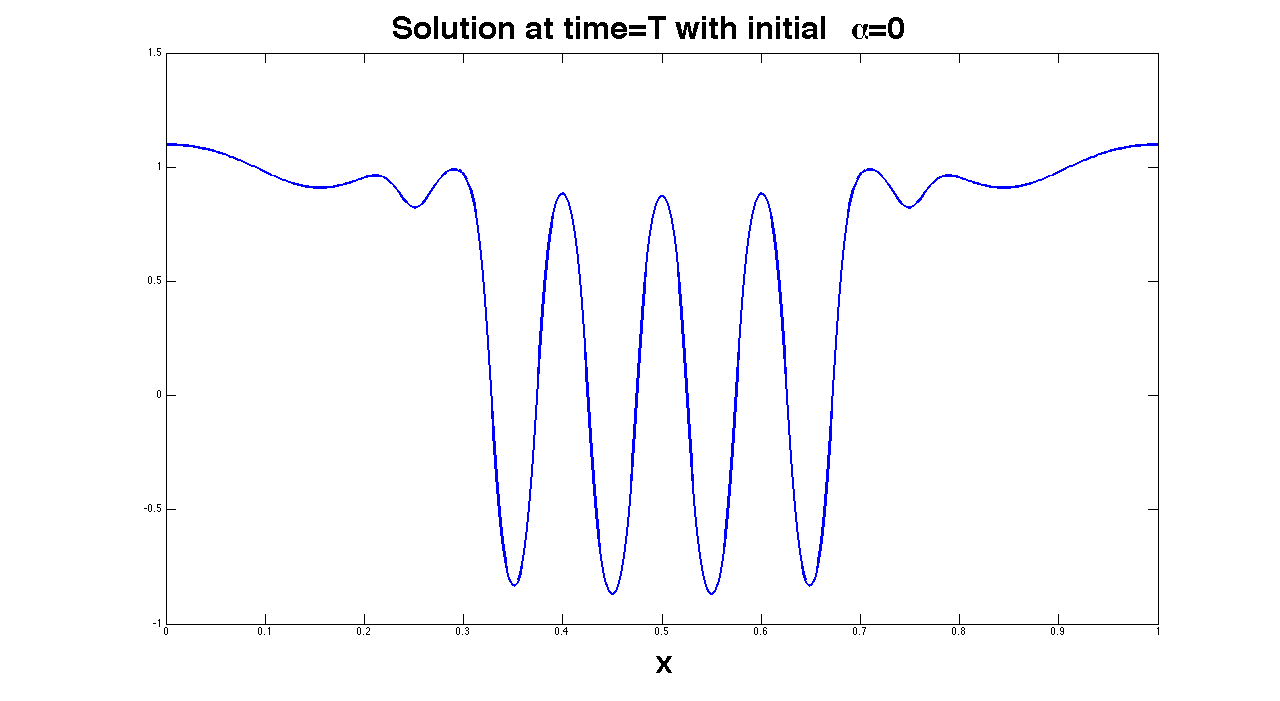}
\includegraphics[height=7.0cm,width=8cm]{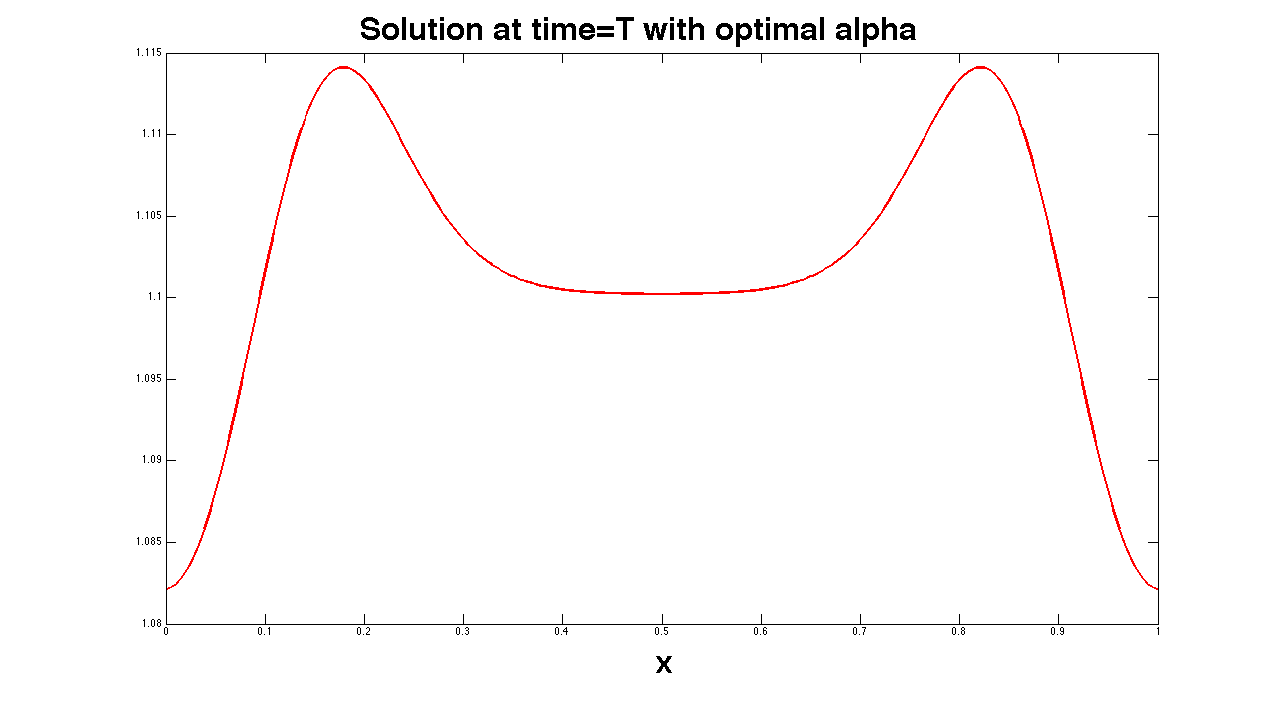}
\end{center}
\caption{$\epsilon=0.01, \ \Delta t = 0.01, \ T=0.5, u_0=\cos(20\pi x), \ N=511$}
\label{fig7}
\end{figure}
\clearpage
\subsection{Random initial data}
\subsubsection{Minimization of the number of interphases}
Here, in Figures (\ref{fig8}) and (\ref{fig9}) the discrete components of $u_0$ are randomly calculated following a uniform law on $[-1,1]$ and we start from $\alpha^0(t)=0, \forall t \in [0,T]$.  As we see, here again, in all cases, the global procedure, illustrating the effective numerical controllability by the boundaries and the robustness of the approach, since the data are very oscillating.
\begin{figure}[h]
\begin{center}
\includegraphics[height=7.0cm,width=8cm]{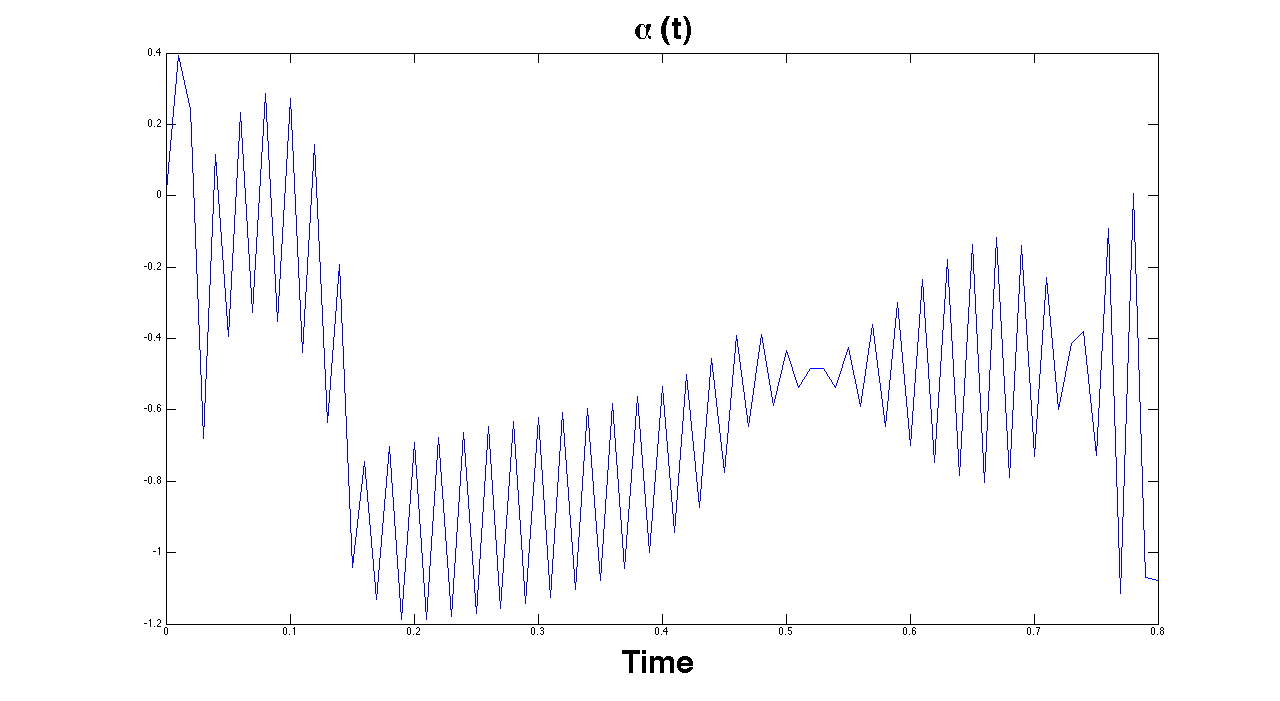}
\includegraphics[height=7.0cm,width=8cm]{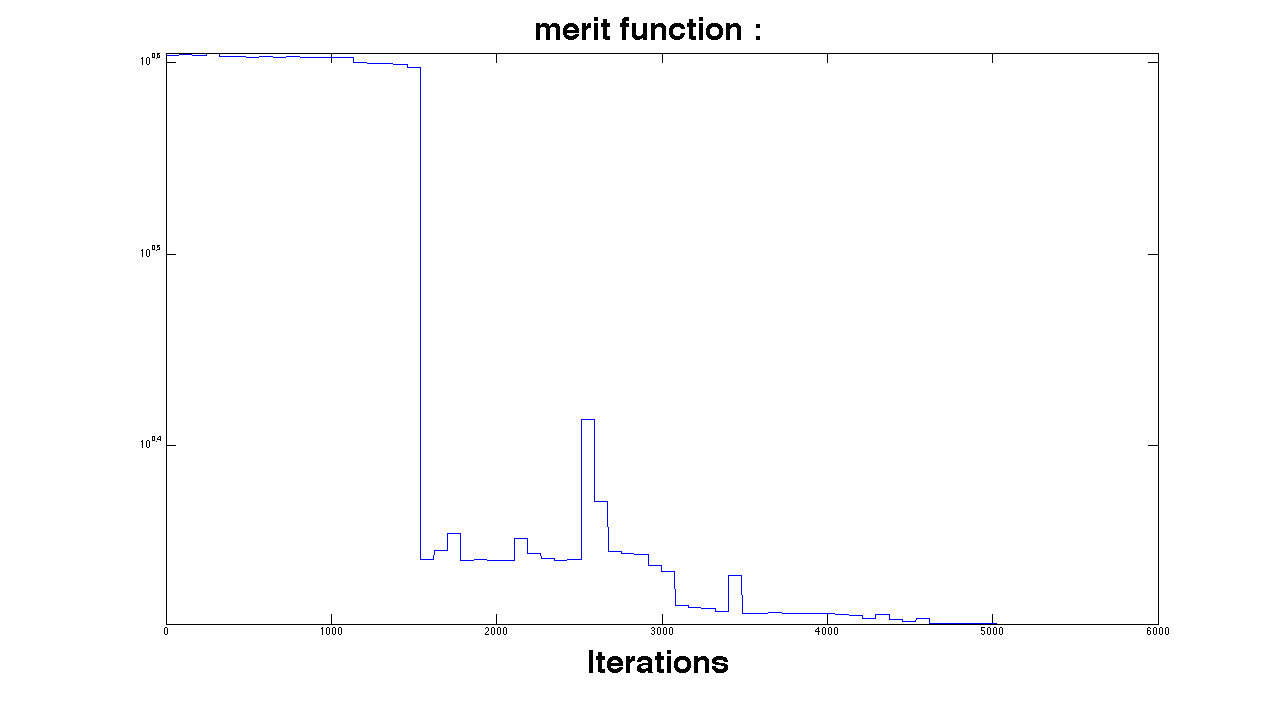}\\
\includegraphics[height=7.0cm,width=8cm]{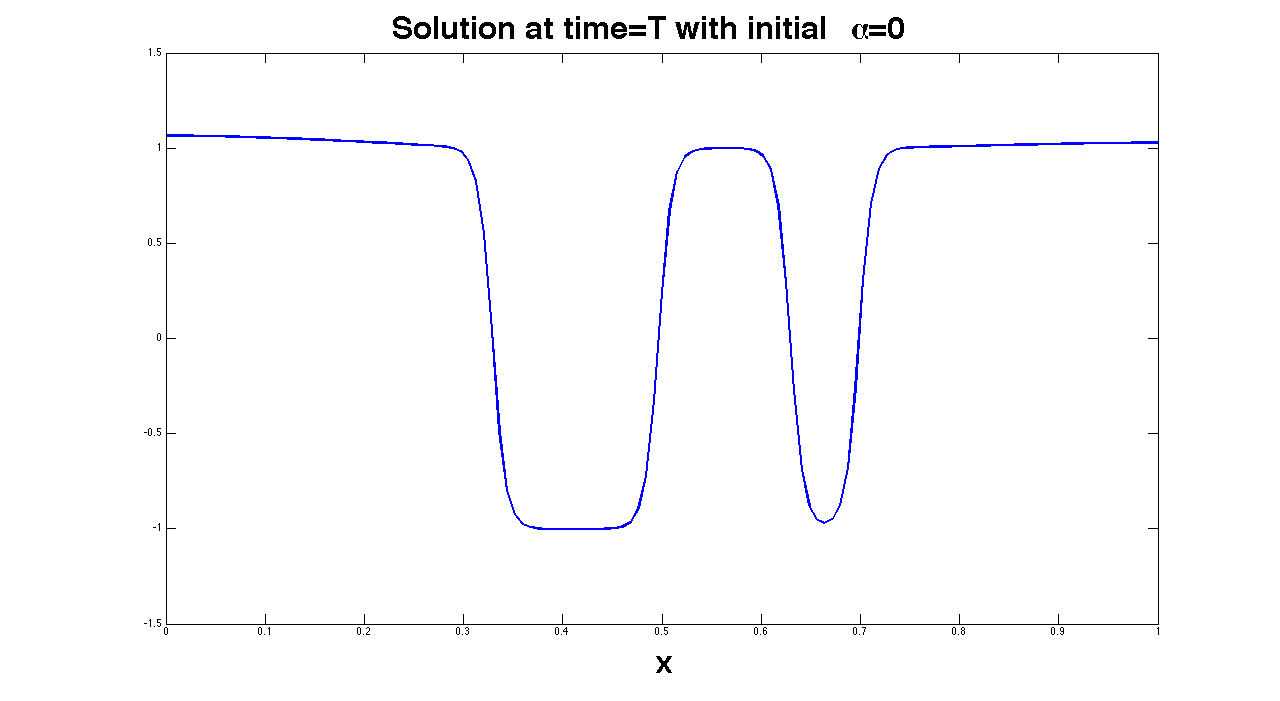}
\includegraphics[height=7.0cm,width=8cm]{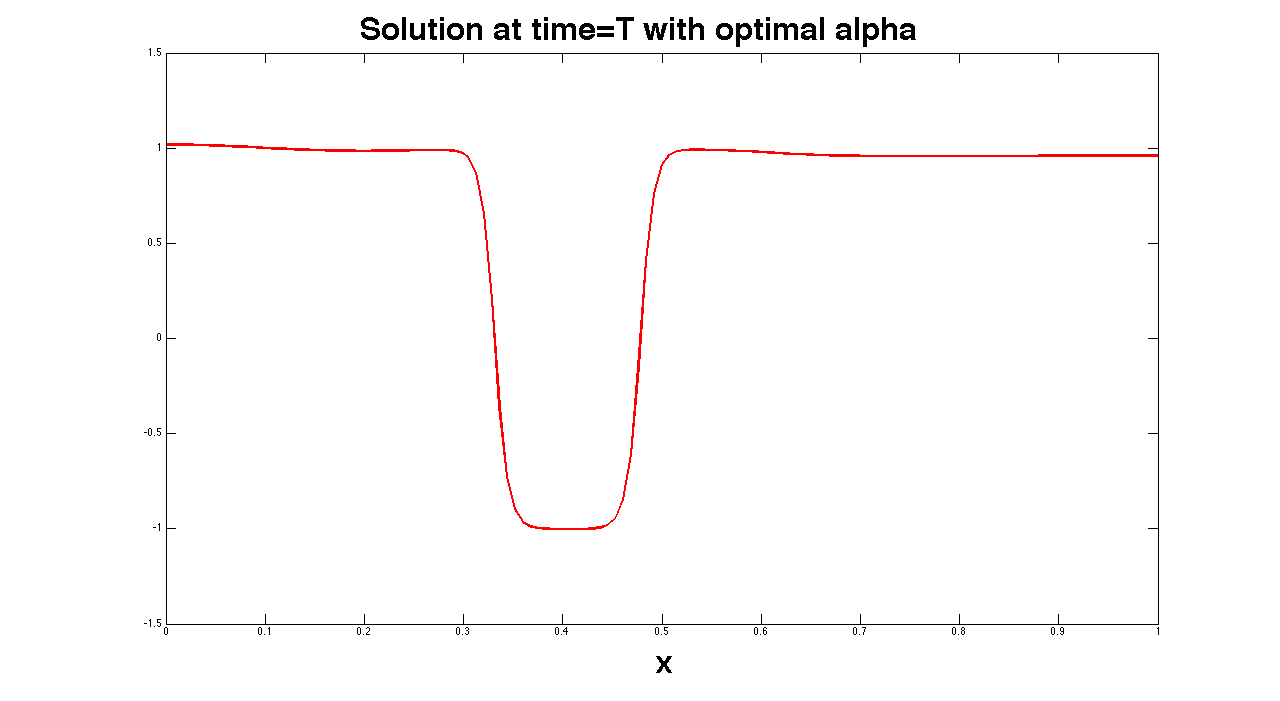}
\end{center}
\caption{$\epsilon=0.01, \ \Delta t = 0.01, \ T=0.8, u_0=rand, \ N=127$}
\label{fig8}
\end{figure}
\clearpage
\begin{figure}[h]
\begin{center}
\includegraphics[height=7.0cm,width=8cm]{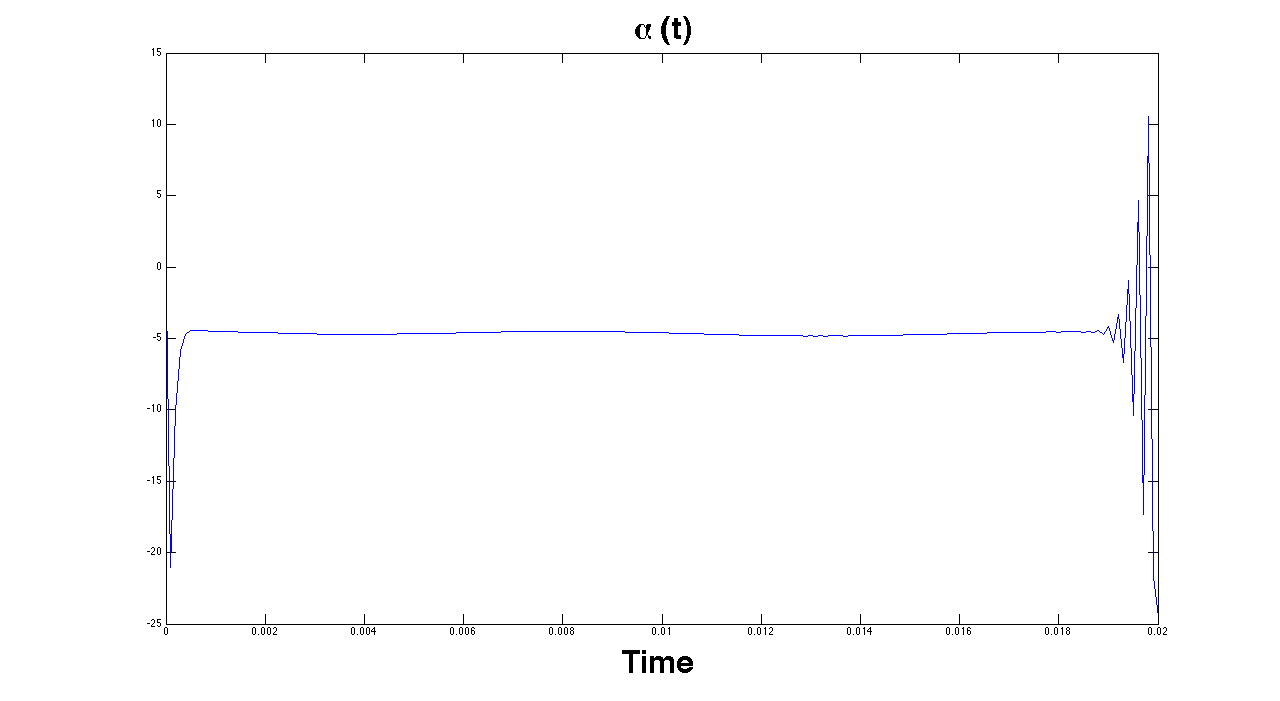}
\includegraphics[height=7.0cm,width=8cm]{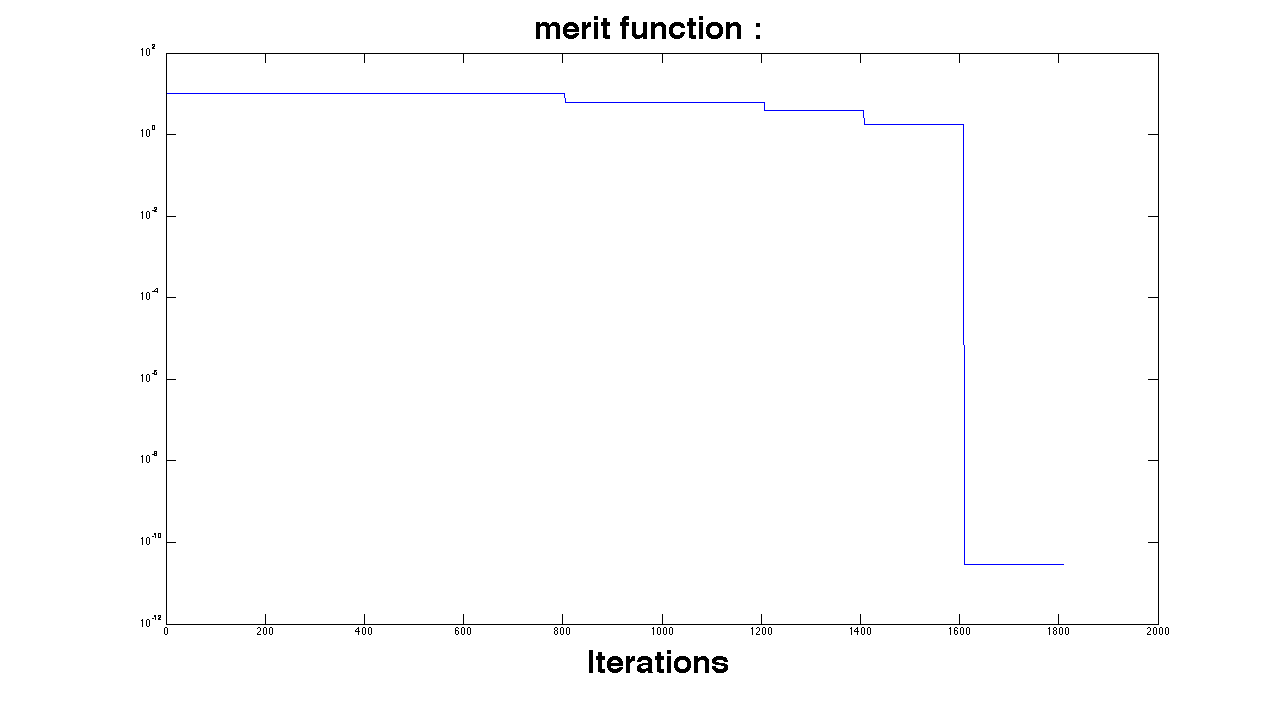}\\
\includegraphics[height=7.0cm,width=8cm]{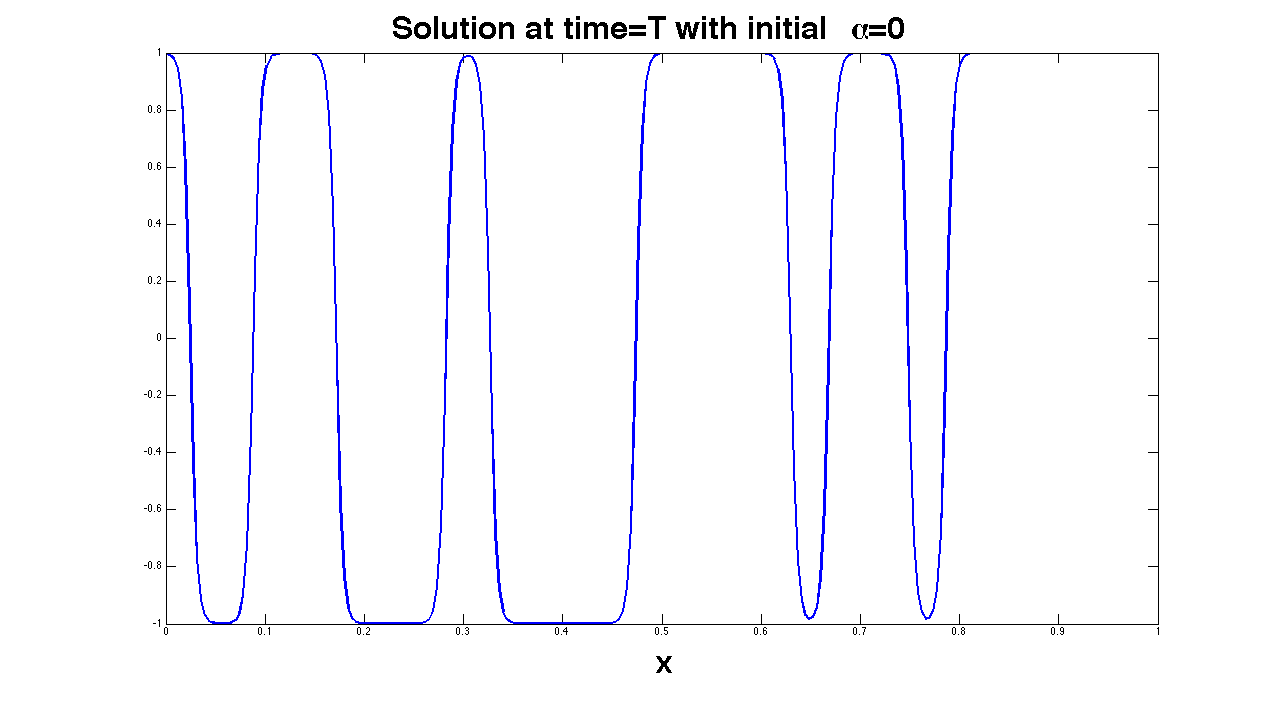}
\includegraphics[height=7.0cm,width=8cm]{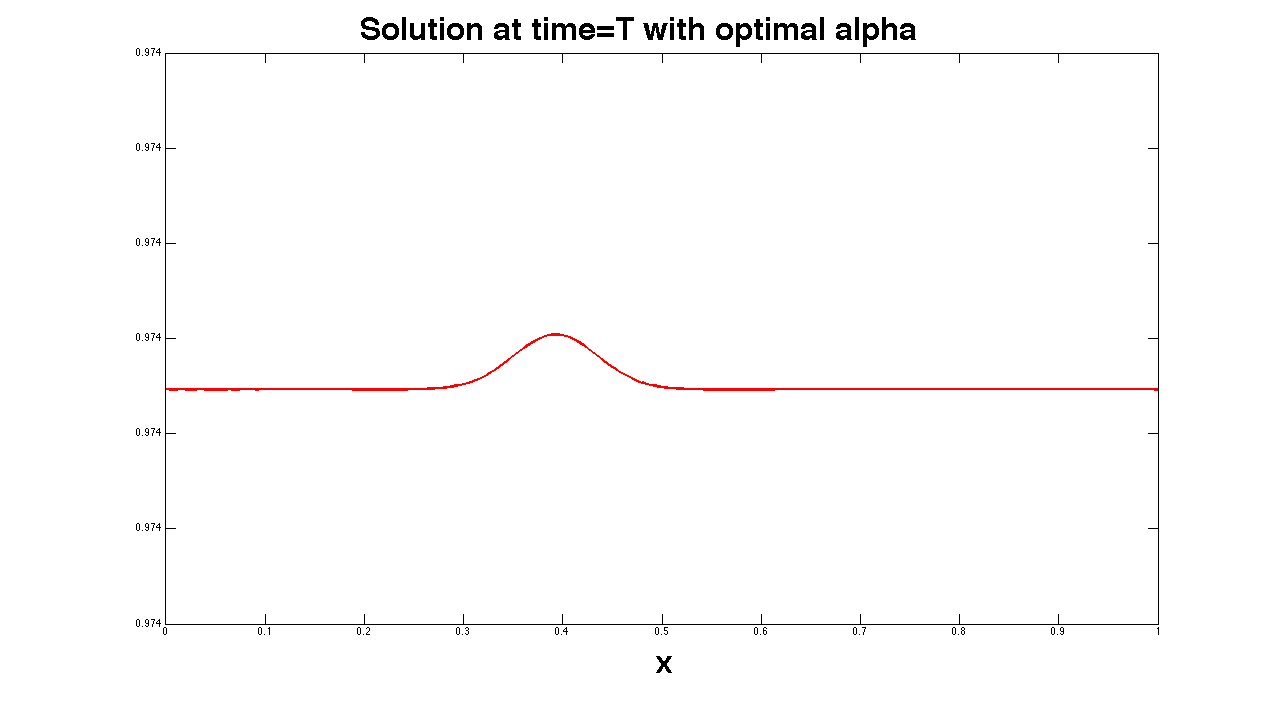}
\end{center}
\caption{$\epsilon=0.005, \ \Delta t = 0.0001, \ T=0.01, u_0=rand, \ N=255$}
\label{fig9}
\end{figure}
\clearpage
\subsubsection{Selection of a given phase}
Finally, we give hereafter a numerical illustration of the optimization process when considering
a weighted merit function as in (\ref{weighted}): we adopt here
$F(u)=10 \parallel u-1\parallel +\parallel u+1\parallel $. As we see in Figure (\ref{fig10}), the merit function favors the formation of
the phase $U=1$, the initial datum is, as above, randomly generated on $[-1,1]$ by an uniform law.
\begin{figure}[h]
\begin{center}
\includegraphics[height=7.0cm,width=8cm]{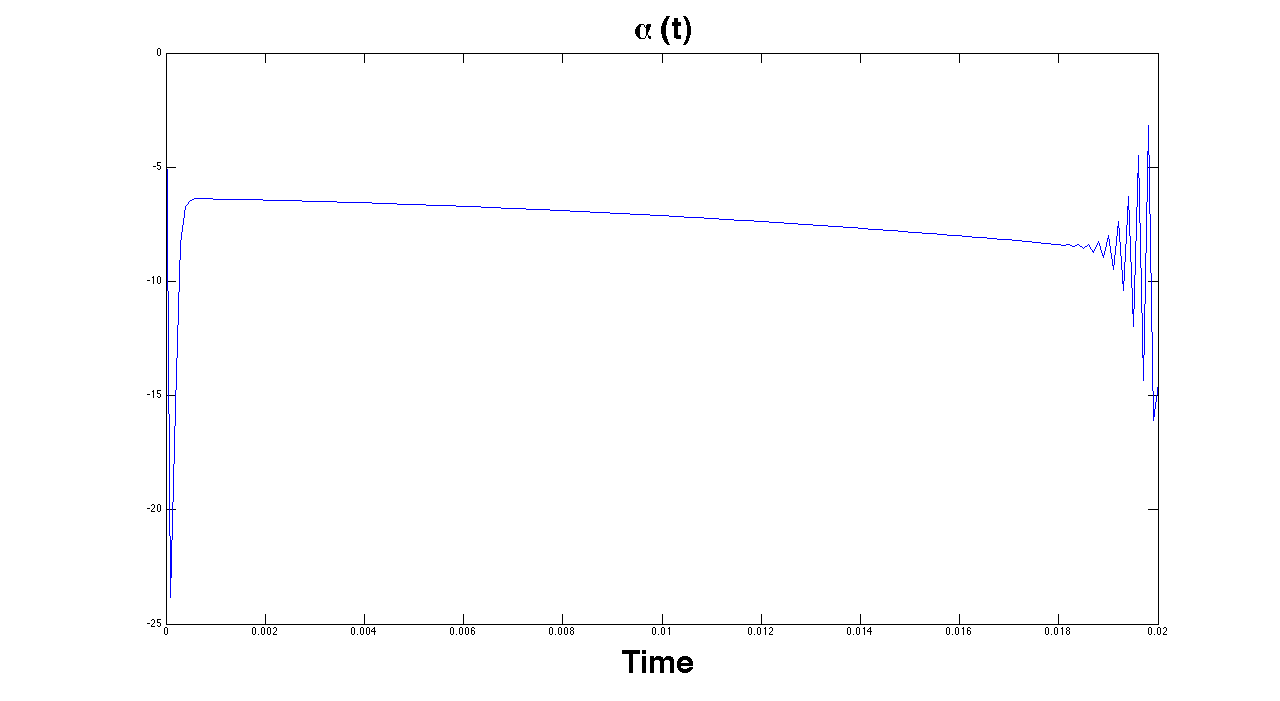}
\includegraphics[height=7.0cm,width=8cm]{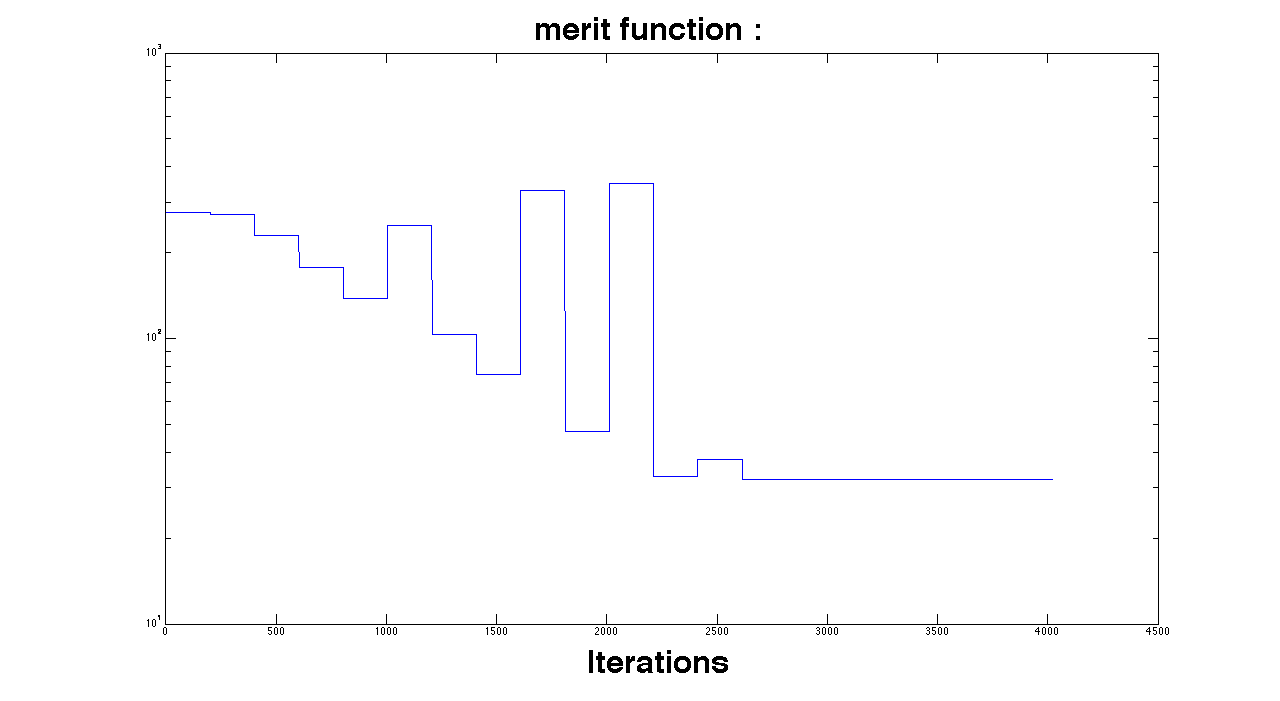}\\
\includegraphics[height=7.0cm,width=8cm]{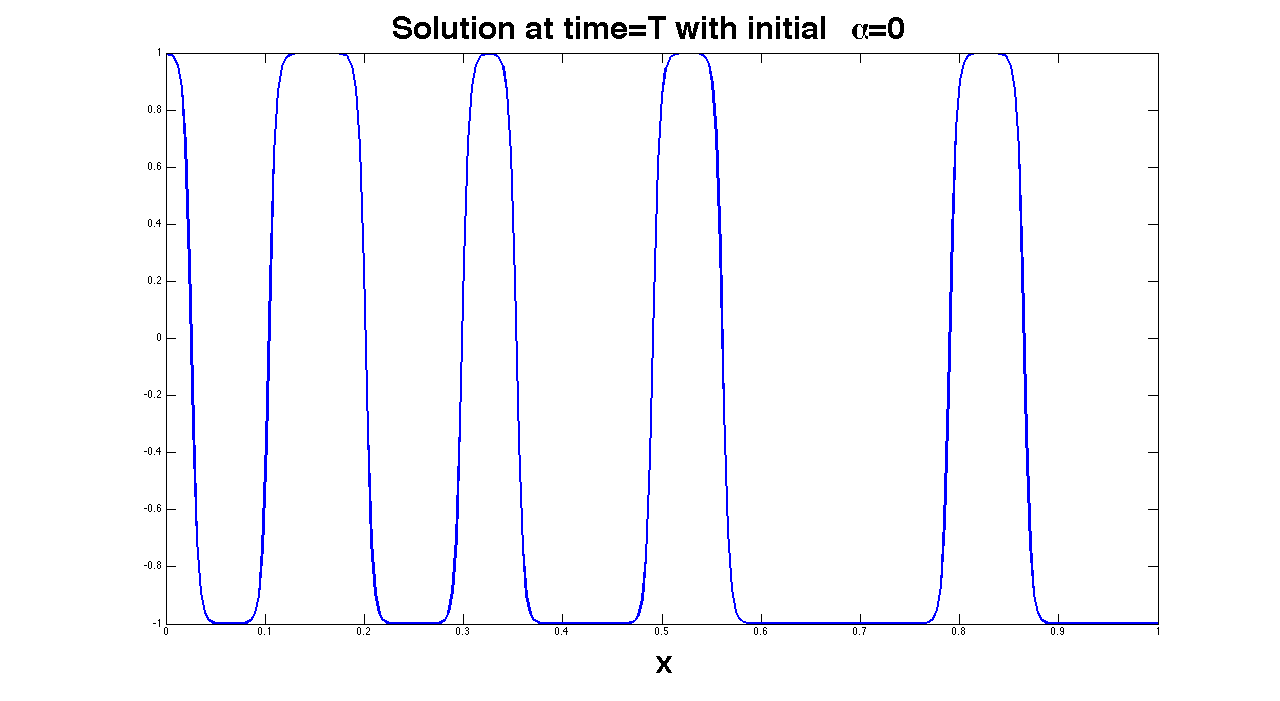}
\includegraphics[height=7.0cm,width=8cm]{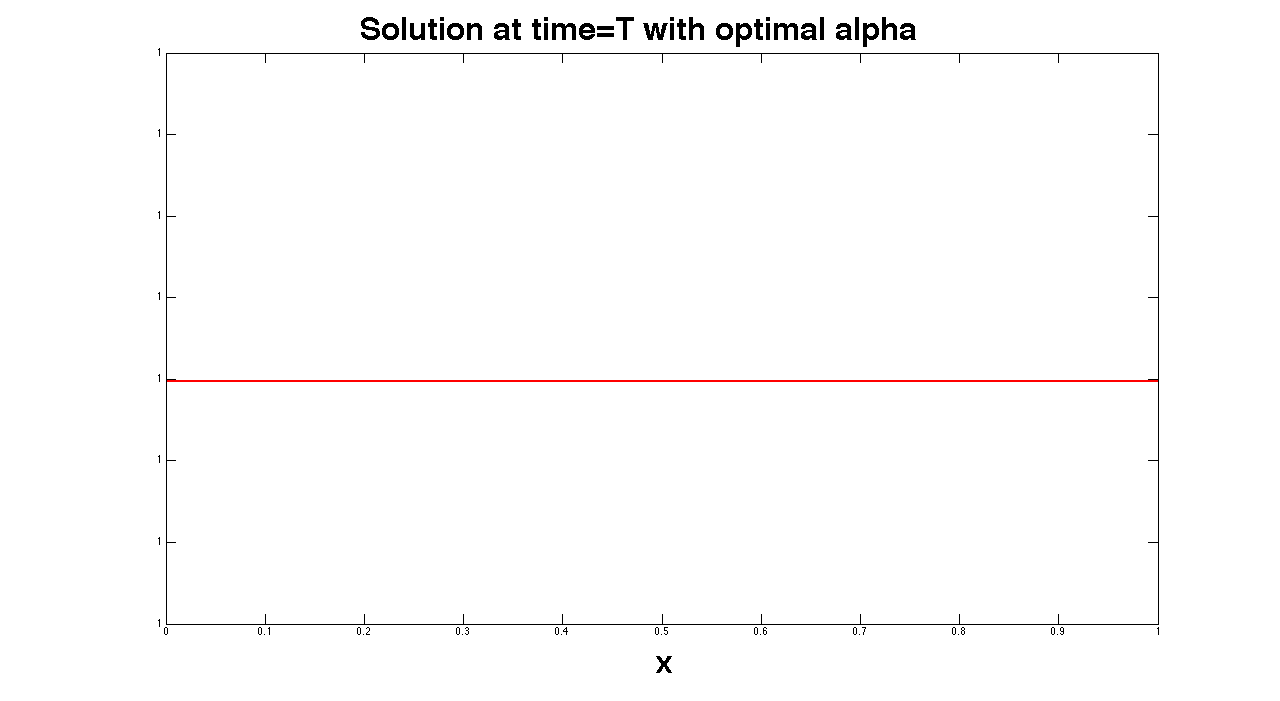}
\end{center}
\caption{$\epsilon=0.005, \ \Delta t = 0.0001, \ T=0.02, u_0=rand, \ N=255$}
\label{fig10}
\end{figure}
\section{Concluding remarks and features}
In this paper, we have presented a simple approach to calculate numerically a boundary control that allows obtaining an optimal steady-sate configuration, i.e., with a minimal number of interphases. We have also demonstrated that we can also favor the formation of a given phase by following the same procedure. The results we obtained are encouraging and show the numerical faisability of the proposed method.. 
Of course, we have here considered first a relatively simple case, namely the one dimensional case  before  extending the approach to 2D or 3D models which correspond to more realistic situations found, for example, in electrochemistry. Furthermore, the monitoring of the number of interphases by $\epsilon$ (problem 2) is an important feature  that we will study in a near future. Finally the integration of such optimization algorithms in an in-house multiscale simulator of electrochemical power generators will be also considered \cite{FrancoLiberT}.

\end{document}